\date{\empty}
\newcommand{\set}[1]{\{#1\}}
\newcommand{\inset}[2]{\{#1 \mid #2\}}
\newcommand{\name}[1]{\emph{#1}}
\newcommand{\size}[1]{\left|#1\right|}
\newcommand{\order}[1]{O\left(#1\right)} 
\newtheorem{lemma}{Lemma}
\newtheorem{theorem}[lemma]{Theorem}
 \newtheorem{corollary}[lemma]{Corollary}
\newcommand{\sig}[1]{\mathcal{#1}}
\title{Constant Amortized Time Enumeration of Independent Sets for Graphs with Bounded Clique Number}
\author[1]{Kazuhiro Kurita}
\author[2]{Kunihiro Wasa}
\author[1]{Hiroki Arimura}
\author[2]{Takeaki Uno}
\affil[1]{IST, Hokkaido University, Sapporo, Japan\\
  \texttt{\{k-kurita, arim\}@ist.hokudai.ac.jp}}
\affil[2]{National Institute of Informatics, Tokyo, Japan\\
  \texttt{\{wasa, uno\}@nii.ac.jp}}
\begin{document}

\maketitle

\begin{abstract}
    In this study, we address the independent set enumeration problem. 
    Although several efficient enumeration algorithms and careful analyses have been proposed for maximal independent sets, 
    no fine-grained analysis has been given for the non-maximal variant. 
    From the main result, we propose an algorithm \EnumIS for the non-maximal variant that runs in $\order{q}$ amortized time and linear space, 
    where $q$ is the clique number, i.e., the maximum size of a clique in an input graph. 
    Note that \EnumIS works correctly even if the exact value of $q$ is unknown. 
    Despite its simplicity, \EnumIS is optimal for graphs with a bounded clique number, such as, triangle-free graphs, planar graphs, bounded degenerate graphs, locally bounded expansion graphs, and $F$-free graphs for any fixed graph $F$, where a $F$-free graph is a graph that has no copy of $F$ as a subgraph. 
\end{abstract}

\section{Introduction}
\label{sec:intro}

A subgraph enumeration problem is defined as follows: 
Given a graph $G$ and a constraint $\sig R$, 
the task is to output all subgraphs in $G$ that satisfy $\sig R$ without duplication. 
We call an algorithm for an enumeration problem an \name{enumeration algorithm}.
Enumeration problems have been widely studied, both in theory and practice, since 1950. 
The independent set enumeration problem is one of the central problems in the enumeration 
and several enumeration algorithms have been proposed for the maximal or maximum independent set enumeration problem~\cite{Tsukiyama:Ide:SICOMP:1977,Alessio:Roberto:SPIRE:2017,Beigel:SODA:1999,Eppstein:JGAA:2003}. 
In particular, theoretically efficient algorithms are developed by restricting the class of input graphs, 
e.g., chordal graphs~\cite{Okamoto:JDA:2008,Leung:JA:1984}, circular arc graphs~\cite{Leung:JA:1984}, bipartite graphs~\cite{Kashiwabara:JA:1992}, and claw-free graphs~\cite{Minty:JCT:1980}. 
Another important object to be enumerated is a clique, that is, the independent set of the compliment graph of a given graph. 
Several efficient algorithms exist for the maximal enumeration problem~\cite{Tomita:TCS:2006,Makino:Uno:SWAT:2004,Alessio:Roberto:ICALP:2016}. 
Since every non-maximal independent set is a subset of some maximal ones, 
using the above results, 
we can find all non-maximal solutions. 
However, it is difficult to avoid to output duplication efficiently. 
Moreover, even though there are many efficient algorithms for maximal variant, no fine-grained analysis has been given for non-maximal variant. 
Thus, this study aims to develop an efficient algorithm for the independent set enumeration problem, 
which also demands the output of non-maximal solutions.

Generally, the number of solutions to an enumeration problem is exponential in the size of the input. 
If the number of solutions is much smaller, then
it becomes unsuitable to evaluate the efficiency of an enumeration algorithm by the size $n$ of an input only since this can sometimes yield a trivial bound, such as $\order{2^n}$ time. 
In this paper, 
we evaluate the efficiency of an enumeration algorithm by \emph{both} the size $n$ of the input and the number $M$ of solutions. 
We call this analysis \name{output sensitive analysis}~\cite{Johnson:Yannakakis:Papadimitriou:IPL:1988}. 
Let $\mathcal{A}$ be an enumeration algorithm;  
$\mathcal{A}$ is an \name{output polynomial time algorithm} if the algorithm runs in $\order{poly(n, M)}$ time. 
$\mathcal{A}$ is a \name{polynomial amortized time algorithm} if the running time is bounded by $\order{M\cdot poly(n)}$, 
that is, $\mathcal{A}$ runs in $\order{poly(n)}$ time per solution on average.
$\mathcal{A}$ runs in $\order{poly(n)}$ \name{delay} if the interval between two consecutive solutions can be bounded by $\order{poly(n)}$ time and  the preprocessing time and postprocessing time are also bounded by polynomial in $n$.
Note that an polynomial amortized time algorithm does not guarantee that the maximum interval between two consecutive outputs, called the \name{delay}, is polynomial. 
So far,  several frameworks and a complexity analysis technique have been proposed for developing efficient enumeration algorithms~\cite{Avis:Fukuda:DAM:1996,Conte:Uno:STOC:2019,Cohen:Kimefeld:Sagiv:JCSS:2008,Conte:Grossi:Marino:Versari:SIAM:JoDM:2019,Uno:WADS:2015}. 
These frameworks have been used to develop several efficient output-sensitive enumeration algorithms, especially for sparse input graphs~\cite{Wasa:Arimura:Uno:ISAAC:2014,Wasa:COCOON:2018,Kazuhiro:ISAAC:2018,Alessio:Roberto:ICALP:2016,Manoussakis:TCS:2018,Bonamy:Defrain:Heinrich:Raymond:STACS:2019}. 
However, as will be shown later, 
simply applying the above results cannot generate an efficient algorithm for the independent set enumeration problem.

\begin{figure}
    \centering
    \includegraphics[width=0.6\textwidth]{./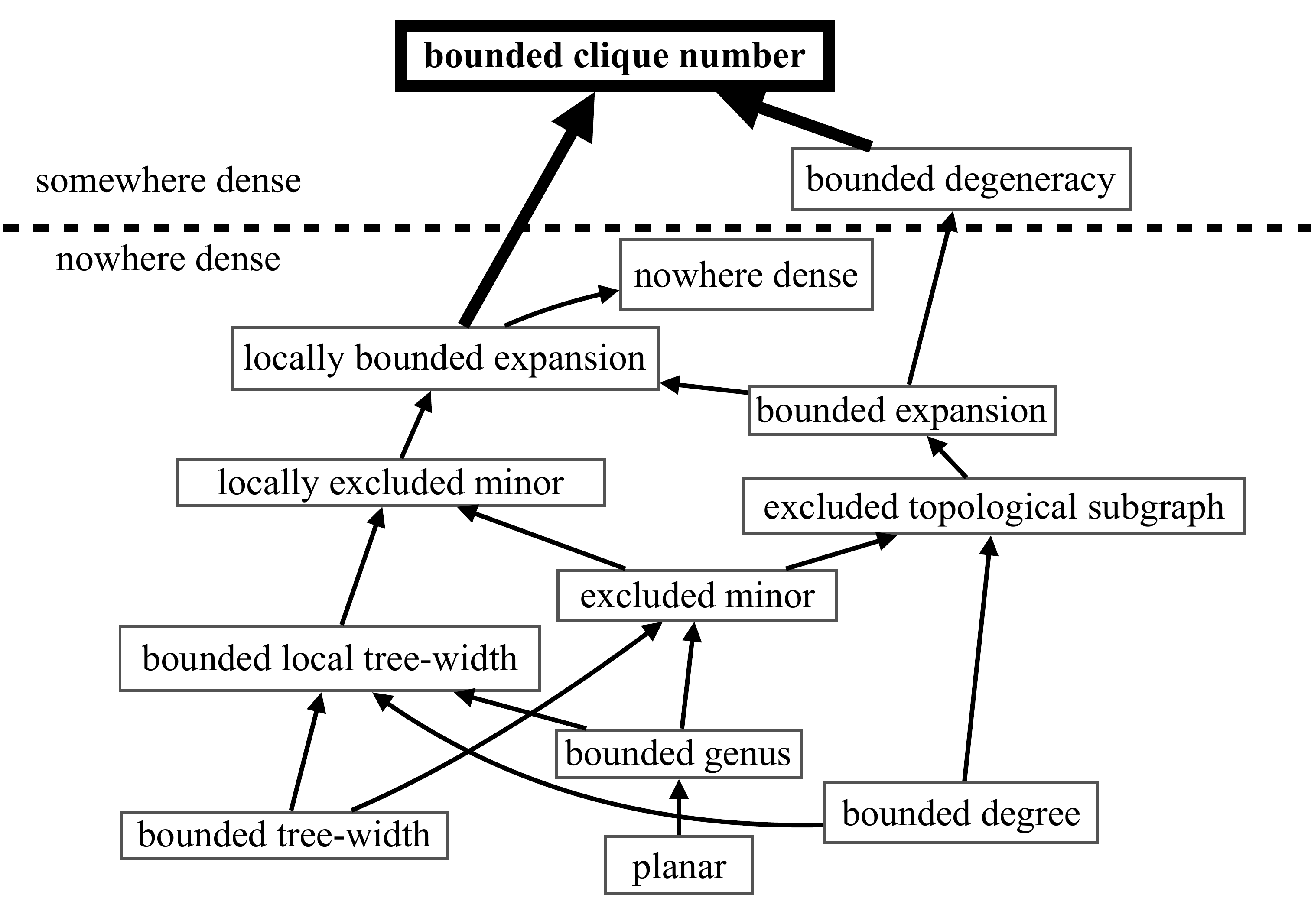}
    \caption{The inclusion relation between graph classes~\cite{Grohe:FSTTCS:2013}. 
    In this map, an arrow goes from a graph class to its super class, i.e., graphs with bounded clique number is a super class of bounded degeneracy graphs. }
    \label{fig:bip}
\end{figure}

\subsection{Main results} 
In this paper, we focus on $K_q$-free graphs, which are graphs that have no cliques with size $q$ as subgraphs.  
Note that every graph is $K_q$-free for some $q$ (e.g., $q = n+1$). 
In addition, 
it is known that if a graph $G$ does not have a clique with size $q$, 
then $G$ belongs to some sparse graph class, such as,
triangle-free graphs ($q = 3$),
planar graphs ($q = 5$ since they do not have both $K_5$ and $K_{3, 3}$),
locally bounded expansion graphs ($q = f(1, 0) + 1$ for some function $f(\cdot)$), 
bounded degenerate graphs ($q$ is at most the degeneracy plus one),
$F$-free graphs for some fixed graph $F$ ($q$ is the size of $F$), 
etc (Figure~\ref{fig:bip}), where an $F$-free graph is a graph that has no copy of $F$ as a subgraph. 
Using the main result of this paper, 
we propose an algorithm \EnumIS called for the independent set enumeration problem
that runs in $\order{q}$ amortized time with linear space. 
\EnumIS is optimal for these graph classes. 
Note that a complete graph with $q$ vertices contains any graph with $q$ vertices as a subgraph, 
thus if $q=\size{F}$, then \EnumIS is optimal for $F$-free graphs. 
We emphasize that \EnumIS works correctly even if the exact value of $q$ is unknown.

\EnumIS is simple \name{binary partition}. 
The algorithm starts with $(G, S=\emptyset)$, where $G$ is an $n$-vertex graph. 
First, \EnumIS outputs $S$ and computes a vertex sequence $(v_1, \dots, v_n)$ of $G$, sorted by a smallest-last ordering~\cite{Matula:Beck:JACM:1983}. 
Next, \EnumIS generates $n$ pairs, made of a subsolution and its corresponding graph  
$S_1 = S \cup \set{v_1}$ and $G_1 = G \setminus N[v_1]$, 
$S_2 = S \cup \set{v_2}$ and $G_2 = G \setminus (\set{v_1} \cup N[v_2])$, $\dots$, and 
$S_n = S \cup \set{v_n}$ and $G_n = G \setminus (\set{v_1, \dots,  v_{n-1}} \cup N[v_n])$. 
Then, for each pair,  \EnumIS makes recursive calls and repeats the above operations. 
We call this generation step an \name{iteration}. 
It can be easily shown that this algorithm runs in $\order{\Delta^2}$ amortized time 
since each iteration has new $n$ child iterations and needs $\order{n\Delta^2}$ time to generate all the children.
However, this naive analysis is too pessimistic since  
the number of vertices whose degree are $\Delta$ may be small. 
For example, if $G$ is a star with $n-1$ leaves that has a vertex with degree $n-1$, 
then the number of solutions is $\order{2^{n-1}}$. 
The first iteration of the algorithm requires $\order{n^2}$ time to generate all the children, 
while other iterations require $\order{\size{ch}}$ time since any subset of the remaining vertices makes a new solution, where $ch$ is the set of child iterations. 
Thus, the total time complexity is $\order{M-1 + n^2}$, i.e., \EnumIS runs in  $\order{1}$ time per solution on average. 
As described above, if $G$ has many vertices with small degree, 
then the simple analysis appears not to be tight. 
Conversely, if $G$ has no vertices with small degree, then $G$ has a large clique from Tur{\'a}n's theorem~\cite{Turan:1941}. 
From this observation, we focus on the clique number of an input graph to give a tight complexity bound. 

The analysis is based on the \name{push out amortization} technique~\cite{Uno:WADS:2015}, however,
it is difficult to directly apply the technique to the proposed algorithm. 
To apply this technique, we focus on the size of the input graph for an iteration. 
If an input graph is sparse, then there are many iterations with small input graphs, e.g., the size is constant,  
hence the sum of the computation time for these iterations dominates the total computation time for \EnumIS. 
In particular, we regard a graph as a small graph if the graph has at most $2q$ vertices. Otherwise, the graph is large. 
Surprisingly, the algorithm correctly works even if the location of the boundary is unknown, that is, the size of a maximum clique. 
In addition, by using \name{run-length encoding}, we show that \EnumIS uses only linear space in total. 
Due to space limitations, all proofs are given in the Appendix.

\section{Preliminaries}
\label{sec:prelim}
Let $G = (V(G), E(G))$ be a simple undirected graph, i.e.,  
$G$ has no self loops and multiple edges, 
where $V(G)$ and $E(G)$ are the set of vertices and edges of $G$, respectively. 
Let $n$ denote the number of vertices in $G$ and $m$ the number of edges in $G$. 
Let $u$ and $v$ be vertices in $G$;  
$u$ and $v$ are \name{adjacent} if $e = \set{u, v} \in E(G)$. 
We denote by $N_G(v)$ the set of the adjacent vertices of $v$ in $G$. 
We call $u$ a \name{neighbor} of $v$ in $G$ if $u \in N_G(v)$ 
and an edge $e = \set{u, v}$ an \name{incident edge} of $v$. 
$\size{N_G(v)}$ is the \name{degree} $d_G(v)$ of $v$. 
The degree of $G$ is the maximum degree of $v \in V(G)$. 
If there is no confusion, we can drop $G$ from the notations. 
A set of vertices $S$ of $G$ is an \name{independent set} if $G[S]$ has no edges, that is, 
for any pair of $u, v \in S$ is not included in $\set{u, v} \in E(G[S])$. 
Let $U$ be a vertex subset of $V$ and $G[U] = (U, E[U])$ be the subgraph of $G$ induced by $U$, 
where $E[U] = \inset{e \in E}{u, v \in U}$. 
For simplicity, 
we write $G \setminus \set{v} := G[V \setminus \set{v}]$, 
and $G \setminus e := (V, E \setminus \set{e})$. 
A graph $G$ is a \name{complete graph} if for any distinct pair $u, v \in V$, $\set{u, v} \in E$ and  
a set of vertices $S$ of $G$ is a \name{clique} if $G[S]$ is a complete graph. 
$K_n$ is expressed as a complete graph with $n$ vertices.  
$G$ is said to be \name{$K_q$-free} if $G$ has no clique with $q$ vertices. 
In this paper, we consider the following enumeration problem: 
Given an undirected graph $G$, then output all independent sets in $G$ without duplication.

\section{The proposed algorithm}
\label{sec:algoq2}

\begin{algorithm}[t]
    \caption{An $\order{q}$ amortized time enumeration algorithm \texttt{EIS} for independent sets, where $q$ is the clique number of $G$. }
    \label{algo:eis}
   \Procedure(\tcp*[f]{$G = (V, E)$: An input graph}){\EnumIS{$G$}}{
       \RecEIS{$G, \emptyset$}\;
    }
    \Procedure{\RecEIS{$G, S$}}{
        Output $S$\;
        \For(\tcp*[f]{$v$ has the minimum degree in $G$. }\label{algo:one:pick}){$v \in V$}{
            \RecEIS{$G[V \setminus N[v]], S \cup \set{v}$}\label{algo:1}\;
            $G \gets G\setminus \set{v}$\label{algo:0}\;
        }
    }
\end{algorithm}

In this section, 
we present a recursive enumeration algorithm \EnumIS based on \name{binary partition}, as shown in Algorithm~\ref{algo:eis}. 
Binary partition is a framework for developing enumeration algorithms. 
We provide a high level description of our proposed algorithm \EnumIS. 

Let $\sig S$ be the solution space of the independent set problem for a given graph $G$. 
For each recursive call $X$, called an \name{iteration}, 
$X$ is associated with a graph $G$ and a solution $S$. 
\EnumIS first outputs $S$ on the iteration. 
Next, it picks a vertex $v$ from $V$ and 
partitions the current solution space $\sig S$ into two distinct subspaces; 
one consists of solutions containing $v$  
and the other consists of solutions that do not contain $v$. 
Then, \EnumIS makes a new iteration $Y$ that receives $G\setminus N[v]$ and $S\cup\set{v}$. 
We call $Y$ a \name{child iteration} of $X$ and $X$ the \name{parent iteration} of $Y$. 
When backtracking from $Y$, \EnumIS removes $v$ from $G$, picks a new vertex, and makes a new child iteration $Y'$. 
Each iteration repeats the above procedure for all vertices in the input graph of the iteration. 
\EnumIS builds a recursion tree $\sig T = (\sig V, \sig E)$, 
where $\sig V$ is the set of iterations and $\sig E$ is given by the parent-child relation among $\sig V$. 
$X$ is referred to as a \name{leaf iteration} if $X$ has no child iterations. 
Otherwise, it is called an \name{internal iteration}. 
Let $G(X)$ and $S(X)$ be the input graph and the input independent set of $X$, respectively. 
From the construction of \EnumIS, we can obtain the following theorem. 

\begin{theorem}
\label{theo:enum}
    Let $G$ be a graph. 
    \EnumIS enumerates all independent sets in $G$ without duplication. 
\end{theorem}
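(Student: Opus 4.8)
The plan is to establish two things: (i) every independent set of $G$ is output at least once, and (ii) no independent set is output more than once. Both follow from analyzing the structure of the recursion tree $\sig T$ and the invariant maintained along any root-to-node path. The key observation is that at an iteration $X$ with input $(G(X), S(X))$, the children are generated by iterating a vertex $v$ over $V(G(X))$, and after the recursive call on $v$ the vertex $v$ is \emph{deleted} from the working graph (line~\ref{algo:0}). Thus if we fix the order in which vertices are processed at $X$ as $v_1, v_2, \dots, v_k$ (with $k = |V(G(X))|$), the $i$-th child iteration $Y_i$ receives input graph $G(X)[V(G(X)) \setminus (\set{v_1,\dots,v_{i-1}} \cup N[v_i])]$ and solution $S(X) \cup \set{v_i}$.

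\smallskip\noindent\textbf{No duplication.} First I would prove by induction on the recursion tree that the sets $S(X)$ output over all iterations $X$ are pairwise distinct. The base case is the root, which outputs $\emptyset$. For the inductive step, consider an internal iteration $X$ and its children $Y_1, \dots, Y_k$ processed in the order above. For $i < j$, every solution output in the subtree rooted at $Y_i$ contains $v_i$ (since $S(Y_i) = S(X) \cup \set{v_i}$ and solutions only grow down the tree), whereas every solution output in the subtree rooted at $Y_j$ \emph{excludes} $v_i$: indeed $v_i$ was removed from the graph before $Y_j$ was created, so $v_i \notin V(G(Y_j))$ and hence $v_i$ never appears in any $S(Z)$ for $Z$ in the subtree of $Y_j$. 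Therefore the subtrees of distinct children output disjoint families of solutions, and none of them equals $S(X)$ itself (each is a strict superset of $S(X)$). Combined with the inductive hypothesis applied within each subtree, all outputs are distinct.

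\smallskip\noindent\textbf{Completeness.} Next I would show every independent set $I$ of $G$ is output. I argue that $I$ is output by the unique iteration reached by the following descent: start at the root; at iteration $X$, if $I = S(X)$ stop; otherwise $I \setminus S(X) \neq \emptyset$ and I need to descend to a child whose added vertex lies in $I$. The subtle point is to pick the \emph{right} child: let $v$ be the vertex of $I \setminus S(X)$ that is processed \emph{first} among all vertices of $I \setminus S(X)$ at $X$. I must check that the child $Y$ corresponding to $v$ actually exists and has $I \subseteq V(G(Y)) \cup S(Y)$. When $Y$ is created, the graph is $G(X)[V(G(X)) \setminus (\set{v_1,\dots,v_{i-1}} \cup N[v])]$ where $v = v_i$; since $v$ is the \emph{first} vertex of $I \setminus S(X)$ processed, none of $v_1, \dots, v_{i-1}$ lies in $I$, and since $I$ is independent and $v \in I$, no vertex of $I \setminus \set{v}$ lies in $N[v]$. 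Hence $I \setminus \set{v} \subseteq V(G(Y))$, and $I \subseteq S(Y) \cup V(G(Y))$ with $S(Y) = S(X) \cup \set{v} \subseteq I$, so the invariant is preserved. Since $|I \setminus S(X)|$ strictly decreases at each step and is nonnegative, the descent terminates at an iteration with $S(X) = I$, which outputs $I$.

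\smallskip\noindent\textbf{Main obstacle.} I expect the only real care is needed in the completeness argument: one must be precise that ``the first vertex of $I$ processed at $X$'' is well-defined and that choosing it (rather than an arbitrary vertex of $I \setminus S(X)$) is what guarantees $I$ survives into the chosen child's graph — choosing a later vertex could fail because an earlier vertex of $I$ would already have been deleted, breaking the invariant. Everything else (independence of $G(X)[\cdots]$ being inherited, solutions only growing downward, disjointness of sibling subtrees) is routine structural bookkeeping on the recursion tree.
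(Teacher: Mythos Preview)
Your proof is correct. The no-duplication argument is essentially the paper's: both rest on the observation that among the sibling subtrees at an iteration $X$, every solution below the $i$-th child contains the picked vertex $v_i$, while every solution below a later child excludes it (the paper phrases this via the lowest common ancestor of two hypothetically duplicate iterations; you phrase it as disjointness of sibling subtrees, but it is the same fact). Your completeness argument, however, takes a different route. The paper argues by induction on $|S|$: given $S$ of size $k$, it deletes one vertex $s$ to obtain $S'$, locates by induction the iteration $X$ that outputs $S'$, and then shows---with a terse case split---that either $s \in V(G(X))$ (so a child of $X$ outputs $S$) or one can climb to the ancestor $Z$ where $s$ was removed in Line~\ref{algo:0} and find $S$ below the child $Y$ of $Z$ that added $s$. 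Your top-down descent, always following the \emph{first} vertex of $I \setminus S(X)$ in the processing order at $X$, is more direct: it makes the invariant $S(X) \subseteq I \subseteq S(X) \cup V(G(X))$ explicit, avoids the case analysis in the paper's inductive step, and as a by-product exhibits the unique root-to-node path leading to the iteration that outputs $I$.
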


\begin{proof}
    We first show \EnumIS outputs all solutions by induction on the size of a solution $S$. 
    We assume that all the solutions whose size are at most $k-1$ are outputted. 
    Let $k$ be the size of $S$ and 
    $S'$ be a vertex set $S \setminus \set{s}$, where $s \in S$.
    Note that any subset of $S$ is also an independent set, 
    and thus, $S'$ is an independent set. 
    From the assumption, there is an iteration $X$ which outputs $S'$. 
    If $G(X)$ contains $s$, then \EnumIS outputs $S$. 
    Otherwise, there is the lowest ancestor iteration $Z$ such that 
    removes $s$ from $G(Z)$ in Line~\ref{algo:0}. 
    Let $Y$ be an iteration such that $S(Y) = S(Z) \cup \set{s}$. 
    Since $S \cap G(X)$ is an independent set in $G$, 
    there is a descendant iteration of $Y$ which outputs $S$, i.e., all solutions are outputted. 
    
    Next, we show \EnumIS does not output duplicate solutions. 
    Let $X$ and $Y$ be two distinct iterations. 
    We assume that the both output the same solution. 
    Let $Z$ be the lowest common ancestor of $X$ and $Y$. 
    We assume that $Z \neq X$ and $Z \neq Y$. 
    Otherwise, the output of $X$ differs from the one of $Y$ from the construction of Algorithm~\ref{algo:eis}, 
    and this contradicts the assumption. 
    Let $z$ be a vertex picked in $Z$ such that $z \in S(X)$. 
    Again, from the construction of the algorithm, 
    $Y$ does not contain $z$. 
    Hence, this contradicts the assumption. 
\end{proof}

Note that Theorem~\ref{theo:enum} holds for any ordering of the vertices in an iteration. 
Hence, as in Line~\ref{algo:one:pick}, we employ the following simple picking ordering: 
pick a vertex with minimum degree. 
This ordering is known as \name{smallest-last ordering} or \name{degeneracy ordering}~\cite{Matula:Beck:JACM:1983}.  
Note that a smallest-last ordering is not unique since there are several vertices with the minimum degree. 
Thus, hereafter, we fix some deterministic procedure to uniquely determine the smallest-last ordering of $G(X)$.

\section{Time complexity}

\newcommand{\mybar}[1]{\overline{#1}}
\newcommand{\pushed}[1]{\rho(#1)}

In this section, we analyze the time complexity of \EnumIS. 
In the following, 
we restrict an input graph to be $K_q$-free. 
Note that every graph is $K_q$-free for some $q$ (for example, $q = n+1$), hence, the result is true for all graphs.
We first give a brief overview of our time complexity analysis, as shown in Figure~\ref{fig:tree}. 
The left part of the figure is the recursion tree $\sig T$ made by \EnumIS. 
In our analysis, we push a part of the computational cost of an iteration to its child iterations. 
The remaining cost is received by the iteration itself. 
The key point is to use different distribution rules from the gray area and the white area. 
The boundary between these areas is defined by the size of the input graph of an iteration. 
More precisely, 
the gray area contains iterations whose input graphs have less than or equal to $2q$ vertices (Sect.~\ref{subsec:time:small:case}), 
and the white area contains iterations whose input graphs have more than $2q$ vertices (Sect.~\ref{subsec:time:large:case}). 
This boundary on $\sig T$ gives a sophisticated time complexity analysis. 
A simple amortization analysis can be applied in the gray area. 
In the white area, 
the push out amortization technique~\cite{Uno:WADS:2015} is used to design the cost distribution rule. 

The right part shows how to push the computational cost from the iteration of the white area to that of the gray area. 
Gray rectangles represent costs pushed to the child iterations, 
while white rectangles represent costs received by the iteration itself. 
By using the push out amortization technique, we can show that an iteration $L_i$ receives only $\order{T(L_i)}$ computational time from its parent, 
where $T(L_i)$ is the computation time of $L_i$. 
That is, the delivered cost $\order{T(L_i)}$ does not worsen the time complexity of $L_i$. 

\begin{figure}[t]
    \centering
    \includegraphics[width=0.9\textwidth]{./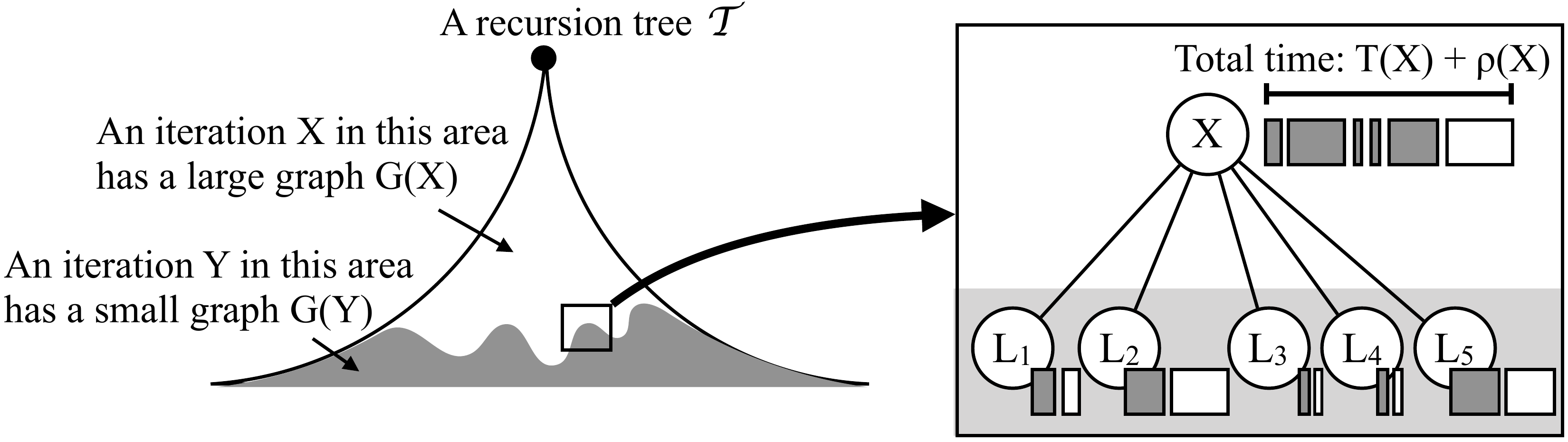}
    \caption{High level overview of our time complexity analysis. In the figure on the right, gray rectangles and a white rectangle represent the computational cost, and the sum of all rectangles means $T(X) + \rho(X)$. $X$ receives a white rectangle and gray rectangles are pushed out to each iteration.}
    \label{fig:tree}
\end{figure}

In the following, 
we provide the detailed description of our analysis. 
Assume that we use an adjacency matrix for storing the input graph.  
For simplicity, we write $V(X) = V(G(X))$ and  $E(X) = E(G(X))$. 
Let $n_X = \size{V(X)}$, $m_X = \size{E(X)}$, $ch(X)$ be the set of children of $X$, and 
$\sig T(X)$ be the recursive subtree of $\sig T$ rooted $X$. 
The next lemma is easy but plays a key role in this section: 

\begin{lemma}
\label{lem:smalltime}
    The total computation time of $\sig T(X)$ can be bounded by $\order{\size{\sig T(X)}n_X}$. 
\end{lemma}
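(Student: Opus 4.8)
The plan is to bound, for a single iteration $Y$ in $\sig T(X)$, the work done \emph{locally} in $Y$ — that is, the cost of the code executed by \proc{RecEIS}$(G(Y), S(Y))$ excluding the recursive calls — and then sum over all iterations in $\sig T(X)$. The key observation is that every graph $G(Y)$ arising in $\sig T(X)$ is an induced subgraph of $G(X)$ (each child is obtained by deleting a closed neighborhood or a single vertex), so $n_Y \le n_X$ and $m_Y \le \binom{n_X}{2} = \order{n_X^2}$ for every $Y \in \sig V(\sig T(X))$.

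First I would analyze the local cost of one iteration $Y$. The body of \proc{RecEIS} outputs $S(Y)$, then loops over the vertices of $G(Y)$; in each loop pass it picks a minimum-degree vertex $v$, builds $G(Y)[V \setminus N[v]]$ to pass to the child, and then deletes $v$ from the working copy of the graph. Using the adjacency-matrix representation, picking a minimum-degree vertex and constructing the induced subgraph on $V \setminus N[v]$ each take $\order{n_Y^2}$ time, and since the loop runs $n_Y$ times, the local work in $Y$ is $\order{n_Y^3}$, which is $\order{n_X^3}$. Outputting $S(Y)$ is $\order{n_Y} = \order{n_X}$ and is dominated. So each iteration contributes $\order{n_X^3}$, giving a crude total of $\order{\size{\sig T(X)} n_X^3}$ — too weak by a factor $n_X^2$.

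To sharpen this I would argue more carefully: the total local cost of $Y$ is dominated by the cost of \emph{creating its children}, and that cost can be charged to the children. Each child $Y'$ of $Y$ is created by one loop pass at cost $\order{n_Y^2} = \order{n_X^2}$; summing over the children of $Y$ gives $\order{\size{ch(Y)} n_X^2}$ for all the child-creation work in $Y$. Charging each such $\order{n_X^2}$ block to the corresponding child, every iteration (other than the root $X$) is charged exactly once, for $\order{n_X^2}$. Hmm — that still leaves $\order{\size{\sig T(X)} n_X^2}$, off by one factor of $n_X$. The genuinely correct accounting is that each loop pass in $Y$ does $\order{n_Y}$ work for the minimum-degree selection and induced-subgraph construction \emph{if} we maintain the graph cleverly, or — more robustly for this lemma — we observe the child $Y'$ created in that pass itself performs $\order{n_{Y'}}$ iterations of its own loop\ldots. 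This is exactly where the subtlety lies, so let me restructure.

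The cleanest route, and the one I would actually write: charge the work of each loop pass in $Y$ — namely $\order{n_Y^2}$ — not to the child but amortize it against the $\order{n_Y}$ vertices that child-creation ``uses up.'' Concretely, since the children of $Y$ are $G(Y)[V\setminus N[v_1]], G(Y)[V\setminus(\{v_1\}\cup N[v_2])], \dots$, and each loop pass removes one vertex from the working graph, a loop pass that removes vertex $v_j$ leaves a working graph on $n_Y - j + 1$ vertices; the total work is $\sum_j \order{(n_Y-j+1)^2}$ which is still $\order{n_Y^3}$. So the honest statement is: one iteration costs $\order{n_Y^3} \le \order{n_X^3}$ locally, but the \emph{sum} telescopes because the children's input sizes strictly decrease. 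The real proof of this lemma must be that the total cost over $\sig T(X)$ equals $\sum_{Y \in \sig V(\sig T(X))} \order{n_Y \cdot (\text{number of loop passes in }Y)} = \sum_Y \order{n_Y^2}$, and then \emph{this} is bounded by $\order{\size{\sig T(X)} \cdot n_X}$ only if we further amortize $n_Y^2$ against the size of $\sig T(Y)$. Indeed $\size{\sig T(Y)} \ge n_Y$ is false in general (a leaf has $\size{\sig T(Y)} = 1$), so the right identity must be: a node with $n_Y$ vertices and doing $\order{n_Y^2}$-per-pass work has $n_Y$ passes, \emph{each pass producing a distinct child}, hence $\size{ch(Y)} = n_Y$ when $Y$ is internal; charging $\order{n_Y}$ per (loop pass, i.e.\ per child) and noting $n_Y \le n_X$ gives total $\order{\sum_Y \size{ch(Y)} n_X} = \order{\size{\sig T(X)} n_X}$, since $\sum_Y \size{ch(Y)} = \size{\sig T(X)} - 1$. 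The remaining gap — that a single loop pass in $Y$ costs $\order{n_Y^2}$, not $\order{n_Y}$, for minimum-degree selection and induced-subgraph extraction — I expect to be the main obstacle, and I would close it by noting that the child iteration $Y'$ created by that pass itself has $\Omega(n_{Y'})$ work to do and $n_Y - n_{Y'} - 1$ equals the number of \emph{previously deleted} vertices, so summing $\order{n_Y}$ over all $n_Y$ passes of $Y$ gives $\order{n_Y^2}$ total for $Y$, and then $\sum_Y \order{n_Y^2} \le n_X \sum_Y \order{n_Y} \le n_X \cdot \order{\size{\sig T(X)} n_X}$ — which is $\order{\size{\sig T(X)} n_X^2}$, still not matching. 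I therefore suspect the intended reading of the lemma assumes an incremental/linear-time maintenance of the minimum-degree vertex and of the induced subgraphs (so one pass is $\order{n_Y}$ amortized, or the stated bound silently folds a factor of $n_X$ into the adjacency-matrix setup differently); I would reconcile the bookkeeping with the algorithm's actual data structures, and the crux of the write-up is getting that per-pass cost down to $\order{n_Y}$ so that $\sum_Y \order{n_Y \cdot \#\text{passes in }Y} = \sum_Y \order{n_Y^2}$ collapses correctly under the $n_Y \le n_X$ bound to $\order{\size{\sig T(X)} n_X}$.
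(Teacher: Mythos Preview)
You land on the right skeleton --- bound each iteration $Y$ by $O(|ch(Y)|\,n_Y)$, use $n_Y\le n_X$, and sum $\sum_Y|ch(Y)|=|\sig T(X)|-1$ --- and you correctly note that $|ch(Y)|=n_Y$ for internal $Y$. This is exactly the paper's argument. The one gap you repeatedly circle and never close is the per-pass cost: you insist that producing one child from $G_i(Y)$ costs $\order{n_Y^2}$ because of the induced-subgraph construction, and that is the error.

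With the adjacency matrix, a loop pass at $Y$ that picks $v_i$ does two things: it reads the row of $v_i$ in the current matrix to find $N[v_i]$, which is $\order{n_Y}$, and it builds the child $Y'$'s adjacency matrix by copying the $n_{Y'}\times n_{Y'}$ submatrix indexed by $V\setminus N[v_i]$, which is $\order{n_{Y'}^2}$ --- \emph{not} $\order{n_Y^2}$. Charge that $\order{n_{Y'}^2}$ to the child $Y'$; it is absorbed because $Y'$ itself has budget $\order{|ch(Y')|\,n_{Y'}}=\order{n_{Y'}^2}$. What stays at $Y$ is $\order{n_Y}$ per pass (plus a one-time $\order{n_Y^2}$ for the smallest-last ordering), so $T(Y)=\order{n_Y^2}=\order{|ch(Y)|\,n_Y}$ as needed. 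Your attempt to get there by ``charging $\order{n_X^2}$ to each child'' fails precisely because you priced the submatrix at the parent's size instead of the child's; once you fix that single cost, your own summation $\sum_Y |ch(Y)|\,n_Y \le n_X\sum_Y|ch(Y)| = \order{|\sig T(X)|\,n_X}$ finishes the proof, matching the paper's (one-sentence) argument.
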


\begin{proof}
    Let $\sig V(X)$ be the set of iterations on $\sig T(X)$. 
    For each iteration $Y$ in $\sig V(X)$, since each picked vertex $v$ on $Y$ generates a new child iteration,  
    $Y$ needs $\order{\size{ch(Y)}n_Y}$ time, 
    and thus, the total time of \EnumIS for $\sig T(X)$ is $\order{\sum_{Y \in \sig V(X)} \size{ch(Y)} n_Y}$. 
    Note that $\sum_{Y \in \sig V(X)} \size{ch(Y)} = \order{\size{\sig V(X)}}$. 
    In addition,  each iteration has the corresponding solution and $n_X \ge n_Y$ for any descendant iteration $Y$ of $X$. 
    Hence, the total time complexity is $\order{\size{\sig T(X)}n_X}$. 
\end{proof}

\subsection{Case: $n_X \le 2q$}
\label{subsec:time:small:case}

From Lemma~\ref{lem:smalltime}, 
if $X$ satisfies $n_X \le 2q$, 
then the total time complexity of $\sig T(X)$ is $\order{q}$ time on average. 
Note that for any descendant iteration $Y$ of $X$, 
since $n_Y \le n_X$, if $n_X \le 2q$, then $n_Y \le 2q$. 
Thus, the time complexity of $Y$ is also $\order{q}$ time on average if $X$ satisfies the condition.

\subsection{Case: $n_X > 2q$}
\label{subsec:time:large:case}

In this subsection, 
we use the \name{push out amortization}~\cite{Uno:WADS:2015} to analyze the case $n_X > 2q$. 
This is one of the general techniques for analyzing the time complexity of enumeration algorithms. 
Intuitively, 
if an enumeration algorithm satisfies the \name{PO condition}, 
the total time complexity of the algorithm can be bounded by the sum of the time complexity of leaf iterations with small time complexity. 
The PO condition is defined as follows: 
For any internal iteration $X$, 
$\mybar{T}(X) \ge \alpha T(X) - \beta(\size{ch(X)} + 1)T^*$. 
Here, 
$T^*$ represents the maximum time complexity among the leaf iterations, 
$\alpha > 1$ and $\beta \ge 0$ denotes some constants, 
$T(X)$ represents the time complexity of $X$, and 
$\mybar{T}(X)$ is the total computation time of child iterations of $X$. 
Note that if $T^*$ is large, then each internal iteration can more readily push its computation time out to its child iterations more.  

We first explain the outline of the push out amortization. 
In \cite{Uno:WADS:2015}, Uno gives a concrete computation time distribution rule for this amortization. 
Let $\pushed{X}$ be a computation time that is pushed out from the parent of $X$ and toward $X$. 
Hence, $X$ now has  $T(X) + \pushed{X}$ as its total computation time. 
To achieve $\order{T^*}$ time per solution on average, 
the computation time of $T(X) + \pushed{X}$ is delivered as follows: 
(D1) $X$ receives $\hat{T}(X) = \frac{\beta(\size{ch(X)} + 1)T^*}{(\alpha - 1)}$ and %
(D2) each child iteration $Y$ of $X$ receives 
the remaining computation time of $\pushed{Y} = \left(T(X) + \pushed{X} - \hat{T}(X)\right)\frac{T(Y)}{\mybar{T}(X)}$. 
In reality, 
since the sum of the number of child iterations of all iterations in $\sig T$ does not exceed the number of solutions, 
each iteration receives $\order{T^*}$ as (D1) on average. 
In addition, if the algorithm satisfies the PO condition, then $\pushed{Y} \le T(X) / (\alpha -1)$. 
The reader should refer to~\cite{Uno:WADS:2015} for more details. 
In the following,  we show that \EnumIS satisfies the PO condition. 
For the following discussion, we introduce some notations. 
Let $V_{i:j} = \inset{v_k \in V}{i \le k \le j}$ and  
$v_i$ be a vertex with minimum degree in $V_{i:n}$. 
We denote the subgraph of $G$ induced by $V(X) \setminus V(X)_{1:i - 1}$ by $G_i(X)$. 

We first consider the number of vertices of a child iteration. 
After picking a vertex with minimum degree in $G(X)$, $X$ removes its neighborhood from $G(X)$.  
Thus,  for each $1 \le i \le n_X$, 
the size of the input graph for the $i$-th children $Y_i$ of $X$ is $(n_X - i + 1) - d_{G_i(X)}(v_i)$. 
Now, assume that the lower bound of the time complexity in each iteration $X$ is $\Omega(n_X^2)$.
Clearly, this assumption does not improve the time complexity of $\EnumIS$. 
Thus, the following equation holds for the total time complexity of all the child iterations:
$\mybar{T}(X) = \Theta\left(\sum_{1 \le i \le n_X} (n_X - i + 1 - d_{G_i(X)}(v_i))^2\right)$. 

Next, we consider the lower bound of $n_X - i + 1 - d_{G_i(X)}(v_i)$ for each $1 \le i \le n_X$. 
Let $\sig V' = \inset{X \in \sig V}{n_X > 2q}$.
Since the size of the input of an iteration is smaller than the size of an ancestor, 
$\sig T' = \sig T[\sig V' \cup ch(\sig V')]$ forms a tree, where $ch(\sig V') = \bigcup_{X \in \sig V'}ch(X) \setminus \sig V'$. 
Thus, we can use the push out amortization technique to analyze this case. 
Since $G$ has no large clique,
the upper bound of $d_{G_i(X)}(v_i)$ is obtained from Lemma~\ref{lem:ub}.
This lemma can easily be derived from Theorem~\ref{theo:turan} that is shown by Tur{\'a}n. 
Let $\tau = (q - 1)/q$. 

\begin{theorem}[(Tur{\'a}n's Theorem~\cite{Turan:1941})]
\label{theo:turan}
    For any integer $q$ and $n$, 
    a graph $G$ that does not contain $K_q$ as a subgraph has at most $\frac{n^2\tau}{2}$ edges. 
\end{theorem}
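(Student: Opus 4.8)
The plan is to prove the standard, slightly sharper statement and then read off the claimed inequality. Writing $T(n,q-1)$ for the complete $(q-1)$-partite graph on $n$ vertices with parts as balanced as possible, I will show that any $K_q$-free graph on $n$ vertices has at most $e(T(n,q-1))$ edges; since $e(T(n,q-1)) \le \bigl(1-\tfrac{1}{q-1}\bigr)\tfrac{n^2}{2} \le \bigl(1-\tfrac1q\bigr)\tfrac{n^2}{2} = \tfrac{n^2\tau}{2}$, this gives the theorem. The cases $q \le 2$ are trivial (a $K_2$-free graph has no edges, and $0 \le \tfrac{n^2}{4}$), so assume $q \ge 3$.

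The argument I would use is Zykov's symmetrization. Fix a $K_q$-free graph $G$ on $n$ vertices with the maximum number of edges. The main step is to prove that non-adjacency is an equivalence relation on $V(G)$; only transitivity needs work. Suppose $uv, vw \notin E(G)$ but $uw \in E(G)$. If $d(v) < d(u)$, delete all edges at $v$ and make $v$ adjacent to $N(u)$ instead: the result is still $K_q$-free (any clique through the modified $v$ maps to a clique of the same size through $u$) and has $d(u) - d(v) > 0$ more edges, contradicting maximality; symmetrically $d(v) \ge d(w)$. But then, replacing \emph{both} $u$ and $w$ by twins of $v$ (giving each of them neighborhood $N(v)$) again yields a $K_q$-free graph --- the two new vertices are non-adjacent so no clique contains both, and each can be swapped back for $v$ --- while the edge count changes by $2d(v) - (d(u) + d(w) - 1) \ge 1 > 0$, a final contradiction. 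Hence non-adjacency is transitive, so $G$ is complete multipartite, and as it is $K_q$-free it has at most $q-1$ parts.

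It then remains to bound the number of edges of a complete $r$-partite graph on $n$ vertices with $r \le q-1$. With part sizes $n_1, \dots, n_r$ we have $e(G) = \binom{n}{2} - \sum_{i=1}^r \binom{n_i}{2}$, and a one-step smoothing argument (if $n_i \ge n_j + 2$, moving a vertex from part $i$ to part $j$ strictly decreases $\sum_i \binom{n_i}{2}$) shows this is maximized when the $n_i$ differ by at most one; substituting $n_i \le \lceil n/r\rceil$ and simplifying gives $e(G) \le \bigl(1 - \tfrac1r\bigr)\tfrac{n^2}{2}$, and $r \le q - 1$ finishes the chain of inequalities above.

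I expect the symmetrization step to be the only genuinely delicate point: in each of the two ``twinning'' moves one must check carefully that no new copy of $K_q$ is created through the cloned vertices, and that the number of edges strictly increases, so that maximality is really violated. The convexity estimate over the part sizes and the final comparison of the $\tfrac{1}{q-1}$- and $\tfrac1q$-bounds are routine. An alternative I could fall back on is Tur\'an's original induction on $n$: find a $K_{q-1}$ in an edge-maximal $K_q$-free $G$, note that each vertex outside it has at most $q-2$ neighbors in it, and apply the inductive hypothesis to the rest --- there the analogous obstacle is the convexity bound needed to combine the three edge counts.
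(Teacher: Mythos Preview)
The paper does not prove this theorem at all: it is stated as a classical result and attributed to Tur\'an~\cite{Turan:1941}, with no argument given. Your Zykov symmetrization proof is correct and standard, and you have correctly noticed that the paper's bound $\tfrac{n^2\tau}{2} = \bigl(1-\tfrac1q\bigr)\tfrac{n^2}{2}$ is slightly weaker than the usual Tur\'an bound $\bigl(1-\tfrac{1}{q-1}\bigr)\tfrac{n^2}{2}$, so the final inequality in your chain is genuine slack rather than an equality. There is therefore nothing to compare against; either your symmetrization argument or the induction you sketch at the end would be an acceptable justification for a result the paper simply quotes.
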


\begin{lemma}
\label{lem:ub}
    Let $G$ be a graph and $v$ be a vertex with the minimum degree in $G$. 
    If the size of a maximum clique in $G$ is at most $q-1$, 
    then $d(v) \le n\tau$, where $n$ is the number of vertices in $G$. 
\end{lemma}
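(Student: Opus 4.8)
The plan is to bound the minimum degree by the average degree and then use Tur\'an's theorem to control the number of edges. First I would observe that, since $v$ has the minimum degree in $G$, it is at most the average degree: $d(v) \le \frac{1}{n}\sum_{u \in V(G)} d(u)$. By the handshake lemma the right-hand sum equals $2m$, where $m = |E(G)|$, so $d(v) \le 2m/n$.

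Next I would invoke the hypothesis. Since the maximum clique in $G$ has size at most $q-1$, the graph $G$ contains no copy of $K_q$ as a subgraph, so Theorem~\ref{theo:turan} applies and yields $m \le \frac{n^2\tau}{2}$ with $\tau = (q-1)/q$. Substituting this into the previous bound gives $d(v) \le \frac{2}{n}\cdot\frac{n^2\tau}{2} = n\tau$, which is exactly the claimed inequality.

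The only points needing a word of care are the degenerate situations: if $V(G)$ is empty the statement is vacuous, and if $G$ has no edges then $d(v) = 0 \le n\tau$ holds trivially, so the averaging step is harmless. One should also note that ``maximum clique of size at most $q-1$'' is precisely the ``$K_q$-free'' condition required to apply Theorem~\ref{theo:turan}. I do not expect any genuine obstacle here; the lemma is essentially a one-line consequence of the ``minimum degree $\le$ average degree'' estimate combined with the edge bound supplied by Tur\'an's theorem.
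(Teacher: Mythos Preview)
Your proof is correct and follows essentially the same approach as the paper: both combine the handshake bound (minimum degree at most $2m/n$) with Tur\'an's edge bound $m \le n^2\tau/2$. The paper phrases it as a contrapositive (if the minimum degree exceeded $n\tau$ then $m > n^2\tau/2$, contradicting Theorem~\ref{theo:turan}), but the content is identical.
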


\begin{proof}
    If the minimum degree of $G$ is more than $n\tau$,  
    then $G$ has more than $\frac{n^2\tau}{2}$ edges. 
    This contradicts Theorem~\ref{theo:turan} and the statement holds.
\end{proof}

Using this upper bound, we show the following lemma which implies that if the size of the input graph of $X$ is large enough, 
that is, $n_X > 2q$, 
then the total computation time of all the child iterations of $X$ consumes more than that of $X$.

\begin{lemma}
\label{lem:lb}
    Let $X$ be an internal iteration in $\sig T'$. 
    There exists a constant $c>0$ such that 
    $\mybar{T}(X) > c(n_X(n_X + 1)(n_X + 2)/6q^2 - q/6 - 1)$. 
\end{lemma}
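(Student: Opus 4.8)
The goal is to lower-bound $\mybar{T}(X) = \Theta\left(\sum_{1 \le i \le n_X}(n_X - i + 1 - d_{G_i(X)}(v_i))^2\right)$ for an internal iteration $X$ with $n_X > 2q$. The plan is to first replace each term $d_{G_i(X)}(v_i)$ by the clean upper bound from Lemma~\ref{lem:ub}. Since $v_i$ has minimum degree in $G_i(X)$, which is a $K_q$-free graph on $n_X - i + 1$ vertices, Lemma~\ref{lem:ub} gives $d_{G_i(X)}(v_i) \le (n_X - i + 1)\tau$. Hence $n_X - i + 1 - d_{G_i(X)}(v_i) \ge (n_X - i + 1)(1 - \tau) = (n_X - i + 1)/q$, so each summand is at least $(n_X - i + 1)^2/q^2$. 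This reduces the problem to bounding $\sum_{1 \le i \le n_X}(n_X - i + 1)^2 / q^2 = \frac{1}{q^2}\sum_{j=1}^{n_X} j^2$ from below.

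The second step is the routine evaluation $\sum_{j=1}^{n_X} j^2 = n_X(n_X+1)(2n_X+1)/6$, which already gives something of the right order; I would then massage this into the exact form $n_X(n_X+1)(n_X+2)/6 q^2 - q/6 - 1$ claimed in the statement. A small subtlety: the bound $n_X - i + 1 - d_{G_i(X)}(v_i) \ge (n_X - i + 1)/q$ is a valid lower bound only while $G_i(X)$ is nonempty and the hypothesis of Lemma~\ref{lem:ub} applies with the right clique parameter; when $n_X - i + 1$ is very small (at most $q$, say) the corresponding summand could in principle contribute nothing, which is exactly why the claimed bound carries the subtracted slack terms $-q/6 - 1$. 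So I would split the sum at $i$ where $n_X - i + 1$ drops below $q$, keep the lower bound $\sum_{j \ge q} j^2/q^2$ for the "large" part, discard (lower-bound by zero) the at most $q$ small terms, and then check that $\sum_{j=1}^{n_X} j^2/q^2$ minus the discarded $\sum_{j=1}^{q-1} j^2 / q^2 \le q/6 + 1$ portion is at least $n_X(n_X+1)(n_X+2)/6q^2 - q/6 - 1$. The constant $c$ comes from the $\Theta$ in the expression for $\mybar{T}(X)$.

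The main obstacle, such as it is, is bookkeeping: matching the cubic polynomial $n_X(n_X+1)(n_X+2)/6$ against the quadratic-sum formula $n_X(n_X+1)(2n_X+1)/6$ and verifying the inequality $n_X(n_X+1)(2n_X+1) \ge n_X(n_X+1)(n_X+2)$ for $n_X \ge 1$ (immediate since $2n_X + 1 \ge n_X + 2$), and then confirming that the discarded small terms are absorbed by the $-q/6 - 1$ slack. There is no genuine mathematical difficulty here once Lemma~\ref{lem:ub} is in hand; the content is entirely in the per-term estimate $n_X - i + 1 - d_{G_i(X)}(v_i) \ge (n_X - i + 1)/q$, which is the one place the $K_q$-freeness (via Turán) is used. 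I would write the argument as: apply Lemma~\ref{lem:ub} termwise, sum, evaluate, and simplify, flagging the small-$i$ terms as the reason for the additive slack.
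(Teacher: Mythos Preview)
Your plan is correct and mirrors the paper's proof: apply Lemma~\ref{lem:ub} termwise to get $n_X - i + 1 - d_{G_i(X)}(v_i) \ge (n_X - i + 1)/q$, drop the small-index tail (the paper cuts at $i < n_X - q$, essentially your split), sum the resulting $(n_X - i + 1)^2/q^2$, and compare the cubic $n_X(n_X+1)(2n_X+1)/6$ to the weaker target $n_X(n_X+1)(n_X+2)/6$. One bookkeeping caveat: your parenthetical estimate $\sum_{j=1}^{q-1} j^2/q^2 \le q/6 + 1$ is not quite right (the sum is about $q/3$), but the extra slack you already noted from $2n_X+1 \ge n_X + 2$ absorbs this, so the argument goes through exactly as in the paper.
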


\begin{proof}
    Let $n_X$ be the number of vertices in $G(X)$ and $i$ be an integer. 
    If $i<n_X  - q$, 
    then $(n_X - i - d_{G_i(X)}(v_i))^2 > (n_X - i + 1 - (n_X - i + 1)\tau)^2$ from Lemma~\ref{lem:ub}. 
    Hence, $\mybar{T}(X) \ge c\sum_{1 \le i \le n_X - q - 1}((n_X - i + 1) - (n_X - i + 1)\tau)^2$
    since $((n_X - i + 1) - d_{G_i(X)}(v_i))$ is non negative for any $i$. 
    Therefore, 
    \begin{eqnarray*}
        \mybar{T}(X) 
        &  =  & c\sum_{1 \le i \le n_X}((n_X - i + 1) - d_{G_i(X)}(v_i))^2\\
        & \ge & c\sum_{1 \le i \le n_X - q - 1}((n_X - i + 1) - (n_X - i + 1)\tau)^2\\
        &  >  & cn_X(n_X + 1)(n_X + 2)/6q^2 - cq/6 - c
    \end{eqnarray*}
    holds. Thus, the statement holds.
\end{proof}

Using Lemma~\ref{lem:lb}, 
we can show that by choosing appropriate values for $\alpha$, $\beta$, and $T^*$,
any internal iteration of $\sig T'$ satisfies the PO condition. 

\begin{lemma}
\label{lem:po}
    Suppose that $\alpha = 3/2$, $\beta = 6$, and $T^* = cq$ for some positive constant $c$,  
    then, any internal iteration $X$ in $\sig T'$ with $n_X \ge 2q$ satisfies the PO condition, that is, 
    $\mybar{T}(X) \ge \alpha T(X) - \beta(n_X + 1)T^*$.  
\end{lemma}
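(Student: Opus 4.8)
The plan is to substitute the lower bound on $\mybar{T}(X)$ supplied by Lemma~\ref{lem:lb} and the estimate $T(X) = \order{n_X^2}$ into the PO inequality and then verify the resulting purely arithmetic statement on the range $n_X \ge 2q$. Under the (cost-inflating) assumption $T(X) = \Omega(n_X^2)$ together with the trivial fact that each iteration spends $\order{\size{ch(X)}\,n_X} = \order{n_X^2}$ time generating its children, every iteration $Z$ satisfies $c_1 n_Z^2 \le T(Z) \le c_2 n_Z^2$ for fixed constants $0 < c_1 \le c_2$. Hence $T(X) \le c_2 n_X^2$, while $\mybar{T}(X) = \sum_{Y \in ch(X)} T(Y)$ is bounded below, via Lemma~\ref{lem:lb}, by $c_1\bigl(n_X(n_X+1)(n_X+2)/(6q^2) - q/6 - 1\bigr)$. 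Taking $T^* = cq$ with a constant $c$ to be fixed at the end, it therefore suffices to establish
\[
  c_1\!\left(\frac{n_X(n_X+1)(n_X+2)}{6q^2} - \frac{q}{6} - 1\right) + 6(n_X+1)\,cq \ \ge\ \tfrac32\,c_2\,n_X^{2}
\]
for every internal $X \in \sig T'$ with $n_X \ge 2q$.

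Then I would clear out the harmless lower-order terms: since $n_X \ge 2q \ge 2$, replacing $n_X(n_X+1)(n_X+2)$ by $n_X^3$, discarding $-c_1 q/6 - c_1$, and dropping the ``$+1$'' each cost only an additive $\order{n_X}$ amount, which is absorbed by the quadratic right-hand side. It is thus enough to show $c_1 n_X^3/(6q^2) + 6cq\,n_X \ge \tfrac32 c_2 n_X^2$ up to lower-order terms, and, after dividing by $n_X$ and writing $r = n_X/q \ge 2$, this reads $c_1 r^2/6 + 6cq \ge \tfrac32 c_2\,r\,q$. I would split according to whether $n_X$ is large or small compared with $q^2$: when $n_X$ is at least a fixed multiple of $q^2$, the single term $c_1 n_X^3/(6q^2)$ already dominates $\tfrac32 c_2 n_X^2$, so the PO condition follows from $\mybar{T}(X)$ alone; when $2q < n_X$ lies below that multiple of $q^2$, the quantity $\tfrac32 c_2 n_X^2 - c_1 n_X^3/(6q^2)$ is bounded, and here the hypothesis $n_X \ge 2q$ enters to control how small $\mybar{T}(X)$ can be relative to $T(X)$ and hence what constant $c$ must be fixed so that the slack $6(n_X+1)cq$ makes up the difference. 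Fixing $c$ accordingly, together with $\alpha = 3/2$ and $\beta = 6$, completes the argument.

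The step I expect to be the main obstacle is exactly this intermediate regime, where $n_X$ is only a moderate multiple of $q$: there the Tur\'an-based bound of Lemma~\ref{lem:lb} still gives only $\mybar{T}(X) = \Theta(n_X^3/q^2)$, which is smaller than $T(X) = \Theta(n_X^2)$, so the inequality cannot be closed by $\mybar{T}(X)$ alone and genuinely relies on the slack $\beta(n_X+1)T^*$ together with $n_X \ge 2q$. Making the three contributions $\mybar{T}(X)$, $\alpha T(X)$ and $\beta(n_X+1)T^*$ balance so that their signed combination stays nonnegative throughout $n_X \ge 2q$ is the only substantive part; once everything is rewritten in the single variable $r = n_X/q$, it reduces to a routine one-variable estimate.
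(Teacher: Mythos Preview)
Your two-regime split does not close. After you divide by $n_X$ and write $r = n_X/q \ge 2$, your own target inequality is
\[
  \frac{c_1 r^{2}}{6} \;+\; 6cq \ \ge\ \frac{3c_2}{2}\,rq .
\]
Fix any constants $c_1,c_2,c$ and pick $r$ with $r > 4c/c_2$ (this is allowed, since $r$ ranges over all of $[2,\infty)$). Then the right-hand side exceeds $6cq$ by an amount that grows linearly in $q$, while the only remaining term on the left, $c_1 r^{2}/6$, is independent of $q$; hence the inequality fails for all sufficiently large $q$. Equivalently, on the intermediate strip $2q \le n_X \lesssim q^{2}$ the deficit $\tfrac32 c_2 n_X^{2} - c_1 n_X^{3}/(6q^{2})$ attains values of order $q^{4}$ (its maximum sits near $n_X \approx 6c_2 q^{2}/c_1$), whereas the slack $6(n_X+1)cq$ is only of order $q^{3}$ there. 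So ``fixing $c$ accordingly'' cannot work with $c$ a constant independent of $q$: the step you yourself flag as the main obstacle is a genuine obstacle, and the ``routine one-variable estimate'' you anticipate does not go through as described.

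The paper does not attempt a large/moderate case split. It writes the full signed combination $\overline{T}(X) - \alpha T(X) + \beta(n_X+1)T^{*}$, divides out the common constant, obtains a single explicit cubic in $n_X$, asserts that this cubic is monotone increasing on $n_X \ge 2q$, and then simply evaluates it at the left endpoint $n_X = 2q$ to get a positive value. Thus where your plan tries to balance three competing terms across two regimes, the paper handles the expression as one polynomial and appeals to a global monotonicity claim on $[2q,\infty)$; any replacement argument you give must likewise control the whole range at once rather than only its two ends.
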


\begin{proof}
    From Lemma~\ref{lem:lb}, 
    there exists $c$  such that 
    $\mybar{T}(X) > c(n_X(n_X + 1)(n_X + 2)/6q^2 - q/6 - 1)$ holds. 
    Hence, 
    \begin{equation}
    \label{eq:diff}
        \mybar{T}(X) - \alpha T(X) + \beta (n_X + 1)T^* \ge n_X(n_X + 1)(n_X + 2)/6q^2 - q/6 - 1 - 3n_X^2/2 + 6qn_X + 6q. 
    \end{equation}

    The right hand side of Eq.~(\ref{eq:diff}) is minimum when $n_X = 2q$ 
    since the side is monotonically increasing for $n_X \ge 2q$.
    Hence,
    \begin{align*}
    \label{eq:pushout}
        &      n_X(n_X + 1)(n_X + 2)/6q^2 - q/6 - 1 - 3n_X^2/2 + 6qn_X + 6q\\
        &  >   n_X(n_X + 1)(n_X + 2)/6q^2 - 3n_X^2/2 + 6qn_X\\
        & \ge  n_X\left((n_X + 1)(n_X + 2)/6q^2 - 3n_X/2 + 6q\right)\\
        &  >   2q(2/3 + 3q) > 0. 
    \end{align*}
    Therefore, any internal iteration $X \in \sig T'$ satisfies the PO condition and the statement holds.
\end{proof}

Recall that any leaf iteration $L$ receives $\order{\beta T(L)/(\alpha - 1)}$ computation time from the parent. 
Hence, the following lemma holds.

\begin{lemma}
\label{lem:leaf}
    Let $L$ be a leaf iteration of $\sig T'$ and $X$ be an internal iteration of $\sig T'$. 
    Then, $L$ receives $\order{T(L)}$ computational time and $X$ needs $\order{q}$ time. 
\end{lemma}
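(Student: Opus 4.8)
The plan is to combine the two distribution rules (D1) and (D2) from the push-out amortization framework with the PO condition just verified in Lemma~\ref{lem:po}. First I would recall the general machinery: since \EnumIS satisfies the PO condition on $\sig T'$ with $\alpha = 3/2$, $\beta = 6$, and $T^* = cq$, the amortization result of Uno~\cite{Uno:WADS:2015} guarantees that the extra cost $\pushed{Y}$ pushed down to any iteration $Y$ from its parent $X$ satisfies $\pushed{Y} \le T(X)/(\alpha - 1) = 2\,T(X)$, and more importantly, by rule (D1), each iteration $X$ itself only ever retains $\hat{T}(X) = \frac{\beta(\size{ch(X)}+1)T^*}{\alpha - 1} = O(q\,(\size{ch(X)}+1))$ of the redistributed cost. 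Since each child iteration corresponds to a distinct solution, summing $\size{ch(X)}+1$ over all $X \in \sig T'$ is $O(M)$, so on average each iteration retains $O(q)$ time; this is the statement ``$X$ needs $\order{q}$ time''.

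Next I would treat the leaf iterations of $\sig T'$. A leaf $L$ of $\sig T'$ is an iteration with $n_L > 2q$ all of whose children have input graphs of size at most $2q$ — i.e., $L$ sits on the boundary between the white and gray regions of Figure~\ref{fig:tree}. Such an $L$ has no children inside $\sig T'$, so it receives no cost via (D1) (there is nothing to push out from it), but it does receive cost $\pushed{L}$ from its parent via (D2). The key point is that, because every internal iteration along the path down to $L$ satisfies the PO condition, the bound $\pushed{L} \le 2\,T(L)$ applies, so $L$'s total charged cost is $T(L) + \pushed{L} = O(T(L))$. Combined with Lemma~\ref{lem:smalltime} and the analysis of the gray region (Section~\ref{subsec:time:small:case}), the subtree hanging below $L$ then contributes $O(q)$ per solution, so the $O(T(L))$ cost received by $L$ does not worsen the overall bound.

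The main obstacle, and the step I would be most careful about, is checking that $\sig T'$ is genuinely the right object to apply the push-out framework to, rather than the full recursion tree $\sig T$. This was already addressed in the text preceding Lemma~\ref{lem:lb}: because the number of vertices in an iteration's input graph strictly decreases from parent to child, $\sig T' = \sig T[\sig V' \cup ch(\sig V')]$ is indeed a tree (not a forest), its internal iterations are exactly those $X$ with $n_X > 2q$, and its leaves are precisely the boundary iterations described above. I would spell out that the PO condition of Lemma~\ref{lem:po} covers every internal iteration of $\sig T'$, so the hypotheses of Uno's theorem are met on $\sig T'$; then rules (D1) and (D2) apply verbatim, giving $\order{T(L)}$ received by each leaf $L$ and $\order{q}$ amortized per internal iteration. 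Finally, I would note that the constant hidden in $\order{q}$ absorbs the fixed values $\alpha = 3/2$, $\beta = 6$, and the constant in $T^* = cq$, which completes the argument.
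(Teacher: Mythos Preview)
Your approach is essentially the same as the paper's: invoke Lemma~\ref{lem:po} to verify the PO condition on $\sig T'$, then read off the two conclusions directly from the distribution rules (D1) and (D2) of Uno's framework, noting that $\alpha$, $\beta$ are constants and $T^*=\order{q}$. The paper's own proof is just a three-line instantiation of exactly this.

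One definitional slip worth fixing: you describe a leaf $L$ of $\sig T'$ as an iteration with $n_L>2q$ all of whose children satisfy $n\le 2q$. That is the \emph{parent} of a leaf. By the paper's definition $\sig T'=\sig T[\sig V'\cup ch(\sig V')]$ with $ch(\sig V')=\bigcup_{X\in\sig V'}ch(X)\setminus\sig V'$, so the leaves of $\sig T'$ are precisely the iterations $L$ with $n_L\le 2q$ whose parent has more than $2q$ vertices; every $X$ with $n_X>2q$ has all its children included in $\sig T'$ and is therefore internal. This does not break your argument, because the bound $\pushed{L}\le T(L)/(\alpha-1)$ from (D2) holds for any iteration $L$ regardless of where the leaf boundary sits, but you should correct the description so that the subsequent hand-off to Lemma~\ref{lem:smalltime} (which requires $n_L\le 2q$) is coherent.
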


\begin{proof}
    From Lemma~\ref{lem:po}, any internal iteration in $\sig T'$ satisfies the PO condition. 
    Remind that 
    any leaf iteration $L$ receives at most $\order{\beta T(L)/(\alpha - 1)} = \order{T(L)}$ time since $\alpha$ and $\beta$ are positive constants. 
    In addition, each internal iteration $X$ of $\sig T'$ has $\beta T^*/(\alpha - 1) = \order{q}$. 
    Thus, the statement holds. 
\end{proof}

From Lemma~\ref{lem:po} and the distribution rule,
any internal iteration in $\sig T'$ has at most $\order{q}$ computation time on average. 
In addition, any leaf iteration $L$ in $\sig T'$ has at most $\order{T(L)}$ computation time. 
Note that from the definition, $T(L) = \order{n^2_L} = \order{ch(L)n_L} = \order{ch(L)q}$.
From Lemma~\ref{lem:smalltime} and Lemma~\ref{lem:leaf}, we can show the following main theorem.

\begin{theorem}
\label{theo:constant}
    \EnumIS enumerates all independent sets in $\order{q}$ amortized time even if the exact value of $q$ is unknown. 
\end{theorem}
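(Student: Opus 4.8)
The plan is to combine the two case analyses from Sections~\ref{subsec:time:small:case} and~\ref{subsec:time:large:case} into a single global bound on the running time of \EnumIS, charging all computation to the $M$ solutions. Recall that the recursion tree $\sig T$ is partitioned along the boundary $n_X = 2q$: every iteration $X$ either lies in $\sig V' = \inset{X}{n_X > 2q}$ (the white area) or satisfies $n_X \le 2q$ (the gray area), and since $n_Y \le n_X$ for any descendant $Y$, the gray area is downward closed, i.e.\ it consists of subtrees $\sig T(X)$ hanging off the iterations in $ch(\sig V')$ together with $\sig T$ itself when the root is already small. First I would handle the gray subtrees: by Lemma~\ref{lem:smalltime}, the total time spent in any such $\sig T(X)$ with $n_X \le 2q$ is $\order{\size{\sig T(X)} n_X} = \order{\size{\sig T(X)} q}$, and since each iteration in $\sig T(X)$ emits a distinct solution, this is $\order{q}$ amortized per solution within that subtree.

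Next I would handle the white area $\sig T'$ via the push out amortization. By Lemma~\ref{lem:po}, every internal iteration of $\sig T'$ with $n_X \ge 2q$ satisfies the PO condition with $\alpha = 3/2$, $\beta = 6$, $T^* = cq$; one should note that the only internal iterations of $\sig T'$ are exactly those in $\sig V'$, all of which have $n_X > 2q \ge 2q$, so the hypothesis of Lemma~\ref{lem:po} is met everywhere it is needed. The distribution rules (D1)--(D2) from~\cite{Uno:WADS:2015} then apply: summing (D1) over $\sig T'$, the total cost received directly by internal iterations is $\order{\sum_{X \in \sig V'} (\size{ch(X)}+1) T^*} = \order{(M+1)q}$, because $\sum_X \size{ch(X)} = \order{\size{\sig T}} = \order{M}$ and each iteration carries its own distinct solution. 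For the leaf iterations of $\sig T'$, Lemma~\ref{lem:leaf} gives that each leaf $L$ receives only $\order{T(L)}$ pushed-down cost, so its total charge is $\order{T(L)} = \order{n_L^2}$. Since a leaf $L \in \sig T'$ is an iteration in $ch(\sig V')$, its input graph satisfies $n_L \le 2q$ (it is the child of a white iteration but itself not white), and moreover $T(L) = \order{n_L^2} = \order{\size{ch(L)} n_L + n_L} = \order{(\size{ch(L)}+1) q}$; charging this across the solutions in $\sig T(L)$ gives $\order{q}$ amortized.

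Putting the pieces together: every solution is produced by exactly one iteration, and each iteration has been charged $\order{q}$ amortized time whether it falls in the gray area, is an internal node of $\sig T'$, or is a leaf of $\sig T'$ whose subtree is itself a gray subtree. Hence the total running time is $\order{(M+1)q}$, i.e.\ $\order{q}$ amortized time per solution. Finally, for the claim that \EnumIS works without knowing $q$: observe that the algorithm in Algorithm~\ref{algo:eis} never references $q$ — the value $q$ and the boundary $n_X = 2q$ appear only in the \emph{analysis}, not in the code — so the same run of \EnumIS is correctly analyzed for whatever the true clique number turns out to be; the bound $\order{q}$ holds a posteriori with $q$ the actual clique number of the input. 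Invoking Theorem~\ref{theo:enum} for correctness completes the argument.

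The main obstacle I anticipate is the bookkeeping at the interface between the two areas — making sure the leaves of $\sig T'$ are exactly the roots of the gray subtrees and that their cost $T(L)$, inflated by the $\order{T(L)}$ received from the parent in $\sig T'$, is still only $\order{q}$ amortized over $\sig T(L)$ rather than being double-counted. The cleanest way to avoid any double-counting is to fix, for each solution, the unique iteration that outputs it, and then verify that every unit of computation time is accounted for under exactly one of the three charging schemes above; the push out amortization guarantees the white-area internal charges and the leaf-received charges sum correctly, while Lemma~\ref{lem:smalltime} covers the gray subtrees in one shot.
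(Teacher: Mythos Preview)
Your proposal is correct and follows essentially the same route as the paper: split the recursion tree at the threshold $n_X = 2q$, bound the gray subtrees by Lemma~\ref{lem:smalltime}, and handle the white part $\sig T'$ via the PO condition (Lemma~\ref{lem:po}) and Lemma~\ref{lem:leaf}, absorbing the pushed cost at the leaves $L$ of $\sig T'$ into the gray subtrees $\sig T(L)$; your bookkeeping at the interface and the remark that $q$ appears only in the analysis match the paper's argument. The one point you omit that the paper includes is the output cost: since a solution can have size up to $n$, the paper notes that $\size{S(X)\mathbin{\triangle} S(Y)}=1$ for a parent--child pair and therefore outputs only this difference, which keeps the total output size bounded by the number of iterations and preserves the $\order{q}$ amortized bound.
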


\begin{proof}
From Lemma~\ref{lem:smalltime}, the time complexity of each iteration in $\sig T(X)$ is
$\order{n_X}$ time on average. 
From Lemma~\ref{lem:leaf}, 
$X$ receives at most $\order{T(X)}$ time from the parent. 
Hence, 
if $n_X \le 2q$,  
then from Lemma~\ref{lem:smalltime}, 
any descendant iteration of $X$ has $\order{q}$ computation time. 
If $n_X > 2q$, then from Lemma~\ref{lem:po} and the distribution rule of the computational cost, 
any $X$ has $\order{q}$ time on average. 
Note that 
the difference between $S(X)$ and $S(Y)$ is exactly one vertex for any iteration $X$ and its child iteration $Y$. 
Thus, the total size of what \EnumIS outputs is bounded by the number of iteration of \EnumIS. 
Therefore, 
by outputting only the difference between the $i$-th solution and the $i+1$-th solution instead of the $i+1$-th solution, 
the amortized time complexity of \EnumIS is $\order{q}$ time and the statement holds. 
\end{proof}

\section{A linear space implementation of \text{EIS}}

\newcommand{\zero}{\texttt{0}}
\newcommand{\one}{\texttt{1}}
\newcommand{\xzero}{x_{\zero}}
\newcommand{\xone}{x_{\one}}
\newcommand{\Mat}[1]{M_{#1}} 
\newcommand{\MR}[1]{M^r_{#1}} 
\newcommand{\Row}[2]{M_{#1}[#2]}
\newcommand{\RowR}[2]{M^r_{#1}[#2]}

\newcommand{\SR}[1]{Seq^r(#1)}

\newcommand{\SL}[1]{SL(#1)}
\newcommand{\SLmod}[1]{\widehat{SL}(#1)}
\newcommand{\SLdiff}[1]{Q(#1)}
\newcommand{\op}[1]{op(#1)}

\newcommand{\GiX}{X}
\newcommand{\GYi}{Y}

\newcommand{\LongInRowR}[3]{R^r(#1, #2, #3)}
\newcommand{\LongInRow}[3]{R(#1, #2, #3)}
\newcommand{\InRowR}[3]{R^r}
\newcommand{\InRow}[3]{R}

In this section, we show that we can implement \EnumIS in linear space. 
The main space bottleneck associated with \EnumIS relates to the following two points: 
One is the representation of input graphs. 
If we naively employ an adjacency matrix to represent an input graph, 
\EnumIS uses $\order{n^2}$ space. 
However, if we employ an adjacent list, then linear space is obtained 
but then it becomes difficult to obtain the input graph for a child iteration $Y$ of a current iteration $X$ in $\order{n_Y^2}$ from $G(X)$. 
Note that $G(Y)$ can easily be obtained in $\order{n_X^2}$ time. 
The other bottleneck is related to the smallest-last ordering. 
If each iteration of \EnumIS stores the smallest-last ordering, 
since the number of iterations between the root iteration and a leaf iteration is at most $n$, 
\EnumIS needs $\order{n^2}$ space.  
To overcome these difficulties and in particular, to achieve  $\order{n + m}$ space,  
we use \name{run-length encoding} for compressing an adjacency matrix and a \name{partial smallest-ordering} to
store only the differences between the smallest-last orderings.

We summarize the data structures that are stored during execution of an iteration $X$ as follows. 
Let $Z$ be an ancestor iteration of $X$ and $Z'$ be the parent of $Z$.  
Suppose that a vertex $z'_i$ is picked on $Z'$. 
We will provide  precise definitions of these data structures in the remainder of this section. 
Roughly speaking, $\MR{G}$ is the run-length encoded adjacent matrix, $\SL{Z}$ is the smallest-last ordering of $G(Z)$, 
and $\LongInRowR{G_i(Z')}{G(Z)}{z'_i}$ represents the removed vertices from $Z'$. 
\begin{enumerate}
    \item $\MR{G_i(X)}$ and the smallest-last ordering of $G_i(X)$, 
    \item Vertices from the first $i$ elements on $\SL{Z}$ and the position of $u$ on $\SL{Z'}$ for each  $u \in N[z'_i]$,
    \item $\SLdiff{\SLmod{Z', z'_i}, \SL{Z}}$ for restoring $\SL{Z'}$ for each $Z$, 
    \item $\RowR{G_i(Z)}{u}$ for each $u \in N_{G_i(Z)}[z_i]$ for a picked vertex $z_i$, and
    \item $\LongInRowR{G_i(Z')}{G(Z)}{z'_i}$ for each $u \in V \setminus N_{G_i(Z')}[z'_i]$.  
\end{enumerate}

\begin{figure}
    \centering
    \includegraphics[width=0.7\textwidth]{./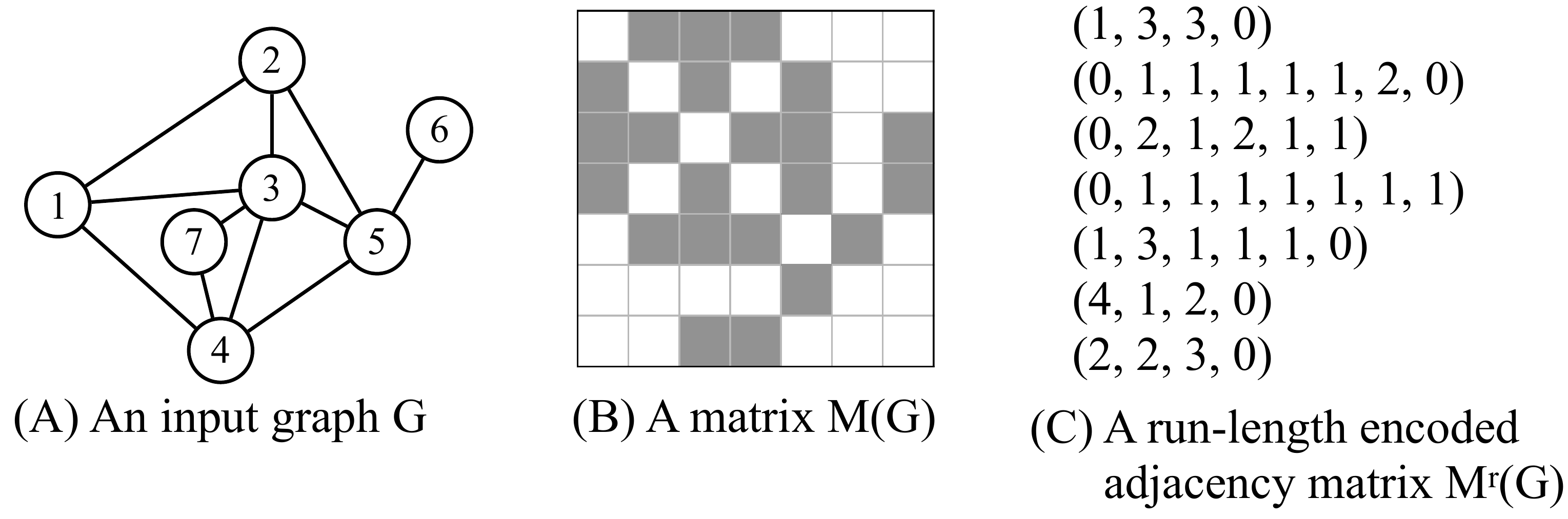}
    \caption{
    An example of an input graph $G$, its adjacency matrix, and its compressed representation. 
    A cell $(i, j)$ in $B$ is gray if $i$ and $j$ are adjacent. Otherwise, the cell is white.  
        (C) shows a run-length encoded adjacency matrix $\MR{G}$. }
    \label{fig:rl}
\end{figure}


First, we introduce the compression of input graphs by \name{run-length encoding}, which is a lossless data compression. 
We define  run-length encoding and run-length encoded adjacency matrix. 
Let $Seq$ be a sequence consisting of $\zero$ and $\one$. 
We define a \name{run-length encoded $\zero$-$\one$ sequence $Seq^r = (a_1, b_1, \dots, a_k, b_k)$} of $Seq$ as follows:  
Let $a_0 = b_0 = 0$. 
For $i > 0$,  $a_i$ is the length of the interval between consecutive $\zero$ sequences starting from the $\sum_{0 \le j \le i - 1} (a_j + b_j) + 1$-th element in $Seq$. 
Similarly, $b_i$ is the length of the interval between consecutive $\one$ sequences starting from the $\sum_{0 \le j \le i - 1} (a_j + b_j) + a_i + 1$-th element in $Seq$. 
For example, if $Seq = (\one, \one, \one, \zero, \zero, \zero, \one, \one, \zero)$, then $Seq^r = (0, 3, 3, 2, 1, 0)$. 
This is denoted by $\size{Seq^r} = 2k$; let us call this the \name{length} of $Seq^r$. 
The following lemma holds for the length of $Seq^r$. 

\begin{lemma}
\label{lem:01}
    Let $Seq$ be a $\zero$-$\one$ sequence and $Seq^r$ be a run-length encoded sequence of $Seq$. 
    Then, the length of $Seq^r = (a_1, b_1, \dots, a_k, b_k)$ is at most $\min\set{2\xzero, 2\xone} + 2$, 
    where $\xzero = \sum_{1 \le i \le k}a_i$ and $\xone = \sum_{1 \le i \le k}b_i$ is the number of $\zero$ and $\one$ in $Seq$, respectively. 
\end{lemma}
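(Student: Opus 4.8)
The plan is to read the bound directly off the combinatorial meaning of the entries of $Seq^r = (a_1, b_1, \dots, a_k, b_k)$. By definition each $a_i$ is the length of the $i$-th maximal block of consecutive $\zero$s of $Seq$ and each $b_i$ the length of the $i$-th maximal block of consecutive $\one$s, the blocks being taken from left to right, with the convention $a_0 = b_0 = 0$ that permits $a_1 = 0$ (when $Seq$ starts with a $\one$) and $b_k = 0$ (when $Seq$ ends with a $\zero$). First I would make this reading explicit and dispose of the degenerate cases — the empty sequence, a string of $\one$s only ($k = 1$, $a_1 = 0$), and a string of $\zero$s only ($k = 1$, $b_1 = 0$) — so that the claimed bound is visibly non-vacuous.

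Second I would isolate the single structural fact the argument needs: $a_i \ge 1$ for every $i$ with $2 \le i \le k$, and $b_i \ge 1$ for every $i$ with $1 \le i \le k - 1$. This is forced by maximality of the blocks. For $2 \le i \le k$ the $\zero$-block counted by $a_i$ lies strictly between the $\one$-block counted by $b_{i-1}$ and the $\one$-block counted by $b_i$; since these are two distinct maximal $\one$-runs they cannot be adjacent in $Seq$, so at least one $\zero$ separates them and $a_i \ge 1$. The symmetric statement for $b_i$ with $i \le k - 1$ follows the same way. Consequently $a_1$ and $b_k$ are the only entries the convention allows to vanish.

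Third I would finish by a trivial count. The entries $a_2, \dots, a_k$ are $k - 1$ positive integers, so $\xzero = \sum_{1 \le i \le k} a_i \ge k - 1$, i.e.\ $k \le \xzero + 1$; symmetrically the entries $b_1, \dots, b_{k-1}$ give $k \le \xone + 1$. Multiplying by $2$, $\size{Seq^r} = 2k \le 2\xzero + 2$ and $\size{Seq^r} \le 2\xone + 2$, hence $\size{Seq^r} \le \min\set{2\xzero,\, 2\xone} + 2$, which is the claim.

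I do not expect a genuine obstacle here; it is a short counting argument. The only place that rewards care is the boundary bookkeeping: being precise that $a_1$ and $b_k$ are exactly the two entries that $a_0 = b_0 = 0$ may send to $0$, so that no interior entry is zero, and — if one wants to remark on tightness — checking an extremal family such as the alternating sequences $\one\zero\one\zero\cdots$, for which $\size{Seq^r}$ meets $2\xzero + 2$ exactly, so the additive constant $2$ cannot be lowered.
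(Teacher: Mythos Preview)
Your proposal is correct and follows essentially the same argument as the paper: both observe that $a_i>0$ for $1<i\le k$ and $b_i>0$ for $1\le i<k$, deduce $k\le\min\{\xzero,\xone\}+1$, and double to get the bound on $2k$. The paper's version is simply terser, omitting your boundary discussion and tightness remark.
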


\begin{proof}
    Since $a_i > 0$ for $1 < i \le k$ and $b_i > 0$ for $1 \le i < k$, 
    $k$ is at most $\min\set{\xzero, \xone} + 1$. 
    Hence, the length of $Seq^r$ is at most $\min\set{2\xzero, 2\xone} + 2$ and the statement holds.
\end{proof}

Let $\Mat{G}$ be an adjacency matrix of $G$, 
$\Row{G}{j}$ be the $j$-th row of $\Mat{G}$, 
and let $\MR{G} = (\RowR{G}{1}, \dots, \RowR{G}{n})$ of $\Mat{G}$. 
We call $\MR{G}$ the \name{run-length encoded adjacency matrix} of $\Mat{G}$. 
An example of a run-length encoded adjacency matrix is shown in Figure~\ref{fig:rl}.
The next lemma shows that the size of $\MR{G}$ is linear in the size of $G$. 
Since $\Mat{G}$ is a $\zero$-$\one$ matrix, 
$\Row{G}{j}$ is a $\zero$-$\one$ sequence with length $n$ for $1 \le j \le n$. 

\begin{lemma}
\label{lem:rl}
    Let $\Mat{G}$ be an adjacency matrix of a graph $G$.
    Then, $\MR{G}$ needs $\order{n + m}$ space, where $n$ and $m$ are the number of vertices and the number of edges, respectively. 
\end{lemma}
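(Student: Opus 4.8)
The plan is to reduce the global bound to a row-by-row bound and then sum over all rows, using Lemma~\ref{lem:01} as the engine. The observation is that $\MR{G}$ is just the concatenation of the encoded rows $\RowR{G}{1}, \dots, \RowR{G}{n}$, so its size is $\sum_{j=1}^{n} \size{\RowR{G}{j}}$ (counting each integer $a_i$ or $b_i$ as one machine word, which is the convention implicit in the statement). Hence it suffices to bound $\size{\RowR{G}{j}}$ for each fixed $j$ and add up.

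For a fixed row $j$, the sequence $\Row{G}{j}$ is a $\zero$-$\one$ sequence of length $n$ whose number of $\one$'s equals the number of neighbours of vertex $j$, i.e. $\xone = d_G(j)$ (here I would note that $G$ is simple, so the diagonal entry is $\zero$ and there is no off-by-one issue). Applying Lemma~\ref{lem:01} with this identification gives $\size{\RowR{G}{j}} \le \min\set{2\xzero, 2\xone} + 2 \le 2 d_G(j) + 2$. Summing over $j$ and invoking the handshaking identity $\sum_{j=1}^{n} d_G(j) = 2m$ yields
\[
    \size{\MR{G}} \;=\; \sum_{j=1}^{n} \size{\RowR{G}{j}} \;\le\; \sum_{j=1}^{n} \bigl(2 d_G(j) + 2\bigr) \;=\; 4m + 2n \;=\; \order{n + m},
\]
which is the claim.

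I do not expect a genuine obstacle here; the argument is essentially a one-line counting once Lemma~\ref{lem:01} is in hand. The only points that need a word of care are (i) making explicit that we are measuring size in words (so that the bound on the \emph{number} of runs translates into a space bound), (ii) ensuring the reduction ``size of $\MR{G}$ = sum of sizes of rows'' is stated, perhaps together with whatever $\order{n}$ bookkeeping is needed to record where one row ends and the next begins, and (iii) confirming that the number of $\one$'s in row $j$ is exactly $d_G(j)$, which uses that $G$ has no self-loops. None of these require real work, so the proof will be short.
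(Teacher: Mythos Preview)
Your proposal is correct and follows essentially the same approach as the paper: apply Lemma~\ref{lem:01} to each row to get $\size{\RowR{G}{j}} = \order{d_G(v_j)}$, then sum over all rows using the handshaking identity to obtain $\order{n+m}$. The paper's proof is terser (it omits the explicit $4m+2n$ computation and your remarks about word size, row delimiters, and self-loops), but the argument is the same.
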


\begin{proof}
    Since from Lemma~\ref{lem:01},  for each $1 \le j \le n$, 
    $\size{\RowR{G}{j}} \le \min\set{2\xzero, 2\xone} + 2$ and $\xone = d(v_j)$,  
    $\RowR{G}{j}$ consumes $\order{d(v_j)}$ space.
    Thus, $M^r(G)$ uses $\order{n + m}$ space and the statement holds. 
\end{proof}

\subsection{Generating the input graphs of child iterations}

In this subsection, 
we explain how to generate child iterations with the compressed inputs. 
This can be done in $\order{n_X^2}$ total time if \EnumIS uses $\MR{G(X)}$. 
To execute Line~\ref{algo:one:pick}, 
we first consider how to obtain the smallest-last ordering. 
If the graph is stored in adjacency list representation, the ordering can be obtained in $\order{n_X + m_X}$ time.  
Now, an adjacency list can be obtained from $\MR{G(X)}$ in $\order{n_X^2}$ time. 
Hence, we can compute the smallest-last ordering of $G(X)$ in $\order{n_X^2}$ time. 

Generating $\MR{G_{i+1}(X)}$ can be done in $\order{n_X}$ time since the run-length encoded sequences are ordered by the smallest-last ordering of $G(X)$, where $v_i$ is the first vertex in the sequences. 
Next, we consider how to generate $\MR{G(Y_i)}$ from $\MR{G_i(X)}$,  
where $Y_i$ is a child iteration of $X$ in Line~\ref{algo:1} whose input graph is obtained by removing $N[v_i]$ from $G_i(X)$. 
Our goal is $\order{n_X + n_{Y_i}^2}$ time for obtaining $\MR{G(Y_i)}$. 
If we can achieve this computation time, 
then the time complexity of $X$ can be bounded by $\order{n_X^2}$ by distributing $\order{n_{Y_i}^2}$ computation time from $X$ to its child $Y_i$. 
From the following lemma, 
we can compute $\MR{G(Y_i)}$ in $\order{n_X - i + \sum_{u \in V(Y_i)}\left(\size{\RowR{G_i(X)}{u}} + n_{Y_i}\right)}$ time. 

\begin{lemma}
\label{lem:right}
    Let $G = (V, E)$ be a graph and $v$ be the vertex with minimum degree in $G$. 
        Then, we can obtain $\MR{G\setminus N[v]}$ from $\MR{G}$ 
        in $\order{\size{V} + \displaystyle \sum_{w \in V \setminus N[v]}\left(\size{\RowR{G}{w}} + \size{\RowR{G}{v}}\right)}$ time. 
\end{lemma}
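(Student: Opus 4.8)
The plan is to describe an explicit procedure that walks through the run-length encoded rows of $\MR{G}$ and produces $\MR{G\setminus N[v]}$ directly, never decompressing a full row into an $n$-bit string. First I would identify the set $N[v]$: since $v$ has minimum degree, $\RowR{G}{v}$ has length $\order{d(v)}$ by Lemma~\ref{lem:01}, and scanning it lets me write down the sorted list of indices in $N[v]$ (and hence, by complement, a way to test membership) in $\order{\size{V}}$ time. This already accounts for the $\order{\size V}$ term; the rest of the budget, $\sum_{w\in V\setminus N[v]}(\size{\RowR{G}{w}}+\size{\RowR{G}{v}})$, must cover the work of rebuilding one output row for each surviving vertex $w$.

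The core step is: given a surviving vertex $w$, produce $\RowR{G\setminus N[v]}{w}$ from $\RowR{G}{w}$. Conceptually this row is obtained from row $w$ of the adjacency matrix by deleting the columns indexed by $N[v]$. I would do this by merging the two run-length encoded sequences $\RowR{G}{w}$ and $\RowR{G}{v}$ in a single synchronized left-to-right pass: maintain a position pointer into each, advance through the maximal constant blocks of both simultaneously, and for every column position that is \emph{not} in $N[v]$ append the corresponding bit of row $w$ to the output, coalescing equal consecutive bits into runs on the fly. Because we only ever move forward in both sequences and emit $\order{1}$ new run-boundaries per block boundary of either input, the time to build $\RowR{G\setminus N[v]}{w}$ is $\order{\size{\RowR{G}{w}} + \size{\RowR{G}{v}}}$, which is exactly the per-$w$ term in the claimed bound. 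Summing over all $w \in V\setminus N[v]$ gives the stated total. One detail to handle carefully is the row/column indices: after deleting $N[v]$ the vertices must be relabelled to a contiguous range, but since both the old and new orderings are just the smallest-last ordering restricted, this relabelling is implicit in the merge and costs nothing extra; I would state this explicitly to avoid a hidden $\order{n_X}$-per-row cost.

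The main obstacle I anticipate is bounding the number of runs in the \emph{output} row: deleting columns can split a run of row $w$ or, worse, can merge runs (two runs of $1$'s separated only by $0$-columns that all lie in $N[v]$ become a single run). I need the merge to recognize such merges so that $\size{\RowR{G\setminus N[v]}{w}}$ stays $\order{\size{\RowR{G}{w}}+\size{\RowR{G}{v}}}$ rather than blowing up; coalescing on the fly handles this, but I should argue that each output run-boundary can be charged to a block boundary of $\RowR{G}{w}$ or of $\RowR{G}{v}$ that we crossed, so no amortized surprise occurs. A secondary, more bookkeeping-level obstacle is making sure the $\order{\size V}$ term genuinely suffices for setting up the pass over all rows (e.g., iterating over $\MR{G}=(\RowR{G}{1},\dots,\RowR{G}{n})$ and skipping the rows of deleted vertices), which is immediate once $N[v]$ is known as a sorted list. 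With those two points addressed, the time bound follows by summing the per-row costs.
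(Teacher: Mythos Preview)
Your proposal is correct and is essentially the same argument as the paper's: first read off $V\setminus N[v]$ from $\RowR{G}{v}$ in $\order{\size V}$ time, then for each surviving $w$ perform a single synchronized left-to-right pass over $\RowR{G}{w}$ and $\RowR{G}{v}$, emitting and coalescing runs on the fly, at cost $\order{\size{\RowR{G}{w}}+\size{\RowR{G}{v}}}$. The only wrinkle is your remark that ``both the old and new orderings are just the smallest-last ordering restricted'': that is not true in general (a smallest-last ordering of $G\setminus N[v]$ need not be the restriction of one for $G$), but it is immaterial here, since the lemma only asks for $\MR{G\setminus N[v]}$ in the inherited vertex order and the child's own smallest-last ordering is recomputed separately.
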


\begin{proof}
    By simply scanning $\RowR{G}{v}$ from the first element to the last element,
    we can obtain $V \setminus N[v]$ in $\order{\size{V \setminus N[v]} + \size{\RowR{G}{v}}} = \order{\size{V}}$ time. 
    Note that $V \setminus N[v]$ is sorted in the order of their indices. 
    
    Next, for each $w \notin N[v]$, 
    we compute $\RowR{G\setminus N[v]}{w}$ from $\RowR{G}{w}$. 
    Let $u_i$ be the $i$-th vertex in $V \setminus N[v]$,
    $c_j$ be the $j$-th element of $\RowR{G}{w}$,  
    and $C_j = \sum_{1 \le j' < j}c_{j'}$. 
    Let  $\RowR{G\setminus N[v]}{w} = \emptyset$ and $c^*$ be the last element of $\RowR{G\setminus N[v]}{w}$. 
    Now we can check whether $u_i$ and $w$ are connected or not by checking the following condition: 
    Let $j = 1$. 
    (1) If $j$ satisfies $C_j \le u_i < C_{j+1}$, then 
    (1.a) if the parity of $j$ and the index of $c^*$ is same then increment $c^*$ by one;  
    (1.b) if the parity of $j$ and the index of $c^*$ is not same then add $1$ as the last element to $\RowR{G\setminus N[v]}{w}$.  
    (2) If $j$ does not satisfy $C_j \le u_i < C_{j+1}$, 
    then increment the value of $j$ by one and check above two condition. 
    After updating the row, 
    we continue to check $u_{i+1}$ from $c_j$. 
    Now both $\RowR{G}{w}$ and $V \setminus N[v]$ are sorted.
    Hence, we can compute $\RowR{G\setminus N[v]}{w}$ from $\RowR{G}{w}$ by scanning $\RowR{G}{w}$ from the first element. 
    If we apply the above procedure naively, then we need $\order{\size{V}}$ time. 
    However, the above procedure can be done in $\order{\size{\RowR{G}{w}} + \size{\RowR{G}{w}}}$ time by 
    processing consecutive $\zero$s or $\one$s in one operation. 
    Hence, the statement holds.  
\end{proof}

Since $\order{\sum_{u \in V(G(Y_i))} n_{Y_i}} = \order{n_{Y_i}^2}$, 
we can push this computation time to $Y_i$. 
In addition, $X$ can also receive the computation time of $\order{n_X - i}$ without worsening the computation time of $X$. 
Next, 
we analyse $\order{\sum_{u \in V(G(Y_i))}\size{\RowR{G_i(X)}{u}}}$ more precisely.  

\begin{lemma}
\label{lem:rlm}
Let $G = (V, E)$ be a graph and $v$ be a vertex with the minimum degree in $G$. 
    Then, the length of $\RowR{G}{u}$ is $\order{\size{V\setminus N[v]}}$ for any $u \in V$. 
\end{lemma}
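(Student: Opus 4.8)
The plan is to bound $\size{\RowR{G}{u}}$ directly through Lemma~\ref{lem:01}, taking the ``number of $\zero$s'' side of the minimum, and then to exploit that $v$ achieves the minimum degree of $G$. First I would recall that the row $\Row{G}{u}$ of the adjacency matrix of $G$ contains exactly $d(u)$ entries equal to $\one$ and hence exactly $n - d(u)$ entries equal to $\zero$ (the $n - 1 - d(u)$ non-neighbors of $u$ together with the diagonal entry). So, in the notation of Lemma~\ref{lem:01} applied to the $\zero$-$\one$ sequence $\Row{G}{u}$, we have $\xzero = n - d(u)$, and therefore $\size{\RowR{G}{u}} \le 2\xzero + 2 = 2(n - d(u)) + 2$.

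Next I would remove all dependence on $u$. Since $v$ has the minimum degree in $G$, $d(v) \le d(u)$, so $n - d(u) \le n - d(v)$. As $N[v]$ is the closed neighborhood, $\size{V \setminus N[v]} = n - \size{N[v]} = n - d(v) - 1$, i.e. $n - d(v) = \size{V \setminus N[v]} + 1$. Chaining these inequalities, $\size{\RowR{G}{u}} \le 2(n - d(v)) + 2 = 2\size{V \setminus N[v]} + 4 = \order{\size{V \setminus N[v]}}$ for every $u \in V$, as claimed.

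There is essentially no hard step here; the only point requiring a little care is to bound by the number of $\zero$s rather than the number of $\one$s in Lemma~\ref{lem:01}, since it is the count of non-neighbors of $u$ --- which the minimum-degree choice of $v$ controls --- that must be transferred from $u$ to $v$. The only mildly degenerate situation is $\size{V \setminus N[v]} = 0$: then $v$, and hence by minimality every vertex, has degree $n-1$, so $G$ is complete and each $\RowR{G}{u}$ already has constant length, which is consistent with the additive constant absorbed into the $\order{\cdot}$ bound (and in the intended application this case does not arise, since Lemma~\ref{lem:right} invokes the estimate only for vertices of a nonempty set $V \setminus N[v]$).
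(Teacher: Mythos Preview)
Your proof is correct and follows essentially the same approach as the paper's own argument: apply Lemma~\ref{lem:01} on the $\zero$-side to get $\size{\RowR{G}{u}} \le 2(n-d(u))+2$, then use $d(v)\le d(u)$ to replace $u$ by $v$. The paper unnecessarily splits into the cases $u=v$ and $u\neq v$, whereas you handle both at once; your explicit remark about the degenerate case $\size{V\setminus N[v]}=0$ is a small addition not present in the paper.
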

\begin{proof}
    Suppose that $u = v$. 
    Note that $\size{V \setminus N[v]}$ is equal to $\xzero$ in $\Row{G}{u}$. 
    Thus, from Lemma~\ref{lem:01}, the length of $\RowR{G}{u}$ is $\order{\size{V \setminus N[v]}}$. 
    Next, we assume that $u \neq v$. 
    Since $d(v) \le d(u)$, $\size{V \setminus N[u]} \le \size{V \setminus N[v]}$. 
    Hence, from Lemma~\ref{lem:01}, the length of $\RowR{G}{u}$ is at most $\min\set{2 \size{V \setminus N[u]}, 2d(u)} + 2 \le \min\set{2\size{V\setminus N[v]}, 2d(u)} + 2$.
    Therefore, $\size{\RowR{G}{u}} = \order{\size{V \setminus N[v]}}$. 
\end{proof}

From Lemma~\ref{lem:rlm}, $\size{\RowR{G_i(X)}{u}} = \order{n_{Y_i}}$ for any vertex $u$ in $V(G_i(X))$.   
Thus, $\order{\sum_{u \in V(G(Y_i))}\size{\RowR{G_i(X)}{u}}} = \order{n_{Y_i}^2}$ and 
we can push this computation time to $Y_i$. 
From the above discussion, we can compute $\MR{G(Y_i)}$ from $\MR{G_i(X)}$ without worsening the time complexity.

\subsection{Restoring the input graph of the parent iteration}

In this subsection, 
we consider backtracking from $Y_i$ to $X$.
The goal here is to show that the restoration of $\MR{G_i(X)}$ from $\MR{G(Y_i)}$ can be done 
in $\order{n_X + n_{Y_i}^2}$ time with $\order{n + m}$ total space. 
To restore $\MR{G_i(X)}$, 
we need to restore the smallest-last ordering in advance. 
Since we only consider the backtracking from $Y_i$ to $X$, 
if no confusion arises, 
we identify $Y_i$ with $Y$,  $G_i(X)$ with $\GiX$, and $G(Y)$ with $\GYi$.  

\subsubsection{Smallest-last orderings}
If \EnumIS stores the smallest-last ordering of $\GiX$ from that of $\GYi$ when making a recursive call and discards it when backtracking,  
then the total space is $\Omega\left(n^2\right)$ since the depth of the search tree  and the number of vertices of the input graphs can be $n$. 
Hence, to achieve $\order{n + m}$ space, \EnumIS does not entirely store the orderings, but rather stores them partially.
Let $\SL{X}$ and $\SL{Y}$ respectively be the smallest-last ordering of $\GiX$ and $\GYi$, 
and $v$ be the vertex such that $S(Y) = S(X) \cup \set{v}$. 
$\SLmod{X, v} = \SL{X}\setminus N_{\GiX}[v]$ denotes the \name{partial smallest-last ordering} obtained by removing the vertices in $N_{\GiX}[v]$ from $\SL{X}$. 
Let $SL$ be a smallest-last ordering and $SL[i]$ be the $i$-th vertex of $SL$. 
We say that a sequence $SL'$ is obtained by \name{shifting} $u$ at position $j$ to $p$ positions in $SL$ if 
$SL' = (SL[1], \dots, SL[j-p-1], SL[j], SL[j-p], \dots, SL[j-1], SL[j+1], \dots, SL[n])$, and write $SL' = \op{SL, u, p}$. 
This is refered to as a \name{shift operation}. 
Let  $\SLdiff{\SLmod{X, v}, \SL{Y}}$ be the sequence of pairs of a vertex and a shift value 
$((u_1, p_1), \dots, (u_\ell, p_\ell))$ with length $\ell$, 
such that   $\SL{Y} = \op{\op{ \cdots \op{\SLmod{X, v}, u_1, p_1} \cdots , u_{\ell-1}, p_{\ell-1}}, u_\ell, p_\ell}$. 
It can easily be shown that $\SLdiff{\SLmod{X, v}, \SL{Y}}$ can be obtained in $\order{\size{\SL{Y}}^2}$ time since
$p_j$ can be obtained in $\order{\size{\SL{Y}}}$ time for each $j \in [1, \ell]$.

\begin{lemma}
    \label{lem:sl}
    Let $X_*$ be the root iteration and $L$ be a leaf iteration. 
    Suppose that $\mathcal{I} = (X_* = I_1, \dots, I_i = L)$ is the path of iterations on $\sig T$
    such that for each $1 \le j < i$, $I_j$ is the parent of $I_{j+1}$ and $S(I_{j+1}) = S(I_j) \cup\set{v_j}$. 
    Then, 
    $\sum_{j \in [1, i-1]} \size{\SLdiff{\SLmod{I_j, v_j}, \SL{I_{j+1}}}} \le m$. 
\end{lemma}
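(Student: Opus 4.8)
The plan is to reduce the claim to a per‑transition inequality and then telescope over edge counts. Fix $j\in[1,i-1]$ and write $\ell_j=\size{\SLdiff{\SLmod{I_j,v_j},\SL{I_{j+1}}}}$. Let $H_j$ be the subgraph of $G(I_j)$ that remains at the moment $v_j$ is picked inside $I_j$; thus $H_j$ is $G(I_j)$ with a prefix of $\SL{G(I_j)}$ deleted, so $m_{H_j}\le m_{I_j}$, the vertex $v_j$ has minimum degree in $H_j$, and $G(I_{j+1})=H_j\setminus N_{H_j}[v_j]$. By definition of the partial smallest‑last ordering, $\SLmod{I_j,v_j}$ is just $\SL{H_j}$ (which is available as a suffix of $\SL{G(I_j)}$) with the vertices of $N_{H_j}[v_j]$ deleted, so $\SLmod{I_j,v_j}$ and $\SL{I_{j+1}}$ are two orderings of the same vertex set $W_j:=V(I_{j+1})$. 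I claim it suffices to show, for every $j$,
\[ \ell_j\ \le\ m_{H_j}-m_{I_{j+1}}, \]
i.e. $\ell_j$ is bounded by the number of edges of $H_j$ incident to $N_{H_j}[v_j]$. Given this, since $m_{H_j}\le m_{I_j}$ and $I_i=L$, summing along the path telescopes: $\sum_{j=1}^{i-1}\ell_j\le\sum_{j=1}^{i-1}(m_{I_j}-m_{I_{j+1}})=m_{I_1}-m_{I_i}=m-m_L\le m$, which is the assertion.

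The second step is to identify what $\ell_j$ actually measures. A single shift operation relocates one vertex while keeping the relative order of all the others intact, so if one builds the difference sequence by scanning $\SL{I_{j+1}}$ from left to right and, whenever the leading vertex is wrong, shifting the correct vertex into place, then exactly one shift is spent per vertex whose relative order changed. Hence $\ell_j$ is at most the number of \emph{displaced} vertices of $W_j$, those that appear in opposite order to some other vertex in $\SLmod{I_j,v_j}$ and in $\SL{I_{j+1}}$ (the same bound holds, a fortiori, if the difference sequence is of minimum length, since then $\ell_j=\size{W_j}-\mathrm{lcs}(\SLmod{I_j,v_j},\SL{I_{j+1}})$ and the non‑displaced vertices form a common subsequence). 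So it is enough to bound the number of displaced vertices by the number of edges of $H_j$ incident to $N_{H_j}[v_j]$.

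The heart of the proof — and the step I expect to absorb essentially all the work — is this last bound. I would compare the greedy smallest‑last elimination on $H_j$ (which begins by deleting the minimum‑degree vertex $v_j$) with that on $G(I_{j+1})=H_j\setminus N_{H_j}[v_j]$. The only vertices of $W_j$ whose degree changes are those with a neighbour in $N_{H_j}[v_j]$, and $\sum_{x\in W_j}\bigl(d_{H_j}(x)-d_{G(I_{j+1})}(x)\bigr)$ equals the number of edges of $H_j$ between $W_j$ and $N_{H_j}[v_j]$, which is at most $m_{H_j}-m_{I_{j+1}}$; in particular the vertices adjacent to $N_{H_j}[v_j]$ already inject into the destroyed edges. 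The obstacle is a possible \emph{cascade}: a changed elimination time at one vertex can, through the greedy dynamics, alter the elimination time, and hence the relative position, of a vertex far from $N_{H_j}[v_j]$ whose own degree did not change. The plan for handling it is an exchange/induction argument on the elimination order that uses crucially that $v_j$ has minimum degree in $H_j$ (so every vertex of $H_j$ has degree at least $\size{N_{H_j}[v_j]}-1$, and the removed set $N_{H_j}[v_j]$ is correspondingly small), together with a stable tie‑breaking rule for the smallest‑last procedure — a fixed global vertex order — which is also what underlies the suffix property used in the first paragraph. The aim is to show either that no cascade occurs, so that the displaced vertices are exactly those adjacent to $N_{H_j}[v_j]$ and the injection above finishes the step, or that the vertices affected by a cascade form a forest whose non‑root vertices can each be charged to a distinct destroyed edge.

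Finally, granting the per‑transition bound $\ell_j\le m_{H_j}-m_{I_{j+1}}$, the proof concludes by the telescoping computation above: $\sum_{j=1}^{i-1}\size{\SLdiff{\SLmod{I_j,v_j},\SL{I_{j+1}}}}\le m_{I_1}-m_{I_i}=m-m_L\le m$.
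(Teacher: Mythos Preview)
Your charging-and-telescoping scheme is exactly the paper's strategy. The paper's entire proof is two sentences: it asserts that whenever a vertex $u$ is shifted, one of $u$'s incident edges has just been removed (the only justification offered is ``since each smallest-ordering is obtained by some fixed deterministic procedure''), so the number of shifts involving $u$ along the whole path $\mathcal I$ is at most $d_G(u)$, and hence the grand total is at most $m$. The paper does not isolate a per-transition inequality, does not mention cascades, and does not explain why the per-vertex bound $d_G(u)$ sums to $m$ rather than $2m$; it simply takes the edge-charging as evident.

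The cascade obstacle you flag is a legitimate concern, and the paper does not address it any more than you do---it asserts the conclusion where you pause. Your proposal is more scrupulous in naming the difficulty, but the gap is still there: the ``exchange/induction'' and ``forest charging'' remarks are plans, not arguments, and neither the minimum-degree property of $v_j$ nor a fixed tie-breaking rule obviously prevents a single degree decrement from rippling through the greedy elimination and displacing vertices with no neighbour in $N_{H_j}[v_j]$. If you want a rigorous proof here you will have to supply what the paper omits. One way to sidestep the cascade analysis is to treat $Q$ constructively rather than descriptively: maintain the child's ordering by performing exactly one left-shift per destroyed edge incident to a surviving vertex, and \emph{declare} the result to be the ordering used in the child. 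Then $\ell_j$ equals the number of such edges by construction and the telescoping goes through; but this changes what $\SL{I_{j+1}}$ means and would have to be checked against the rest of the space analysis.
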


\begin{proof}
    If a vertex $u$ is shifted, this implies that one of its incident edge is removed from a graph since each smallest-ordering is obtained by some fixed deterministic procedure. 
    Thus, the number of applying shift operations is at most $d_G(u)$ on $\mathcal{I}$. 
    Hence, the total number of applying shift operations is at most the number of edges in the input graph. 
\end{proof}

In addition, $\SL{X}$ is obtained by adding $N_{\GiX}[v]$ to $\SLmod{X, v}$ in $\order{\size{N_{\GiX}[v]}}$ time 
if for each $u \in N_{G(X)}[v]$, \EnumIS stores the position of $u$ in $\SL{X}$. 
This needs $\order{n}$ space in total since each vertex is removed from a graph at most once on the path from a current iteration to the root iteration. 
Hence, from the above discussion, we can obtain the following lemma: 

\begin{lemma}
    \label{lem:space}
    We can compute $\SL{X}$ from $\SL{Y}$ in $\order{n_X + n_Y^2}$ time with $\order{n+m}$ space in total when backtracking from $Y$ to $X$. 
\end{lemma}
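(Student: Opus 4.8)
The plan is to prove Lemma~\ref{lem:space} by bundling together the ingredients developed earlier in this subsection, namely Lemma~\ref{lem:sl} (bounding the total length of all the shift-sequences along any root-to-leaf path by $m$) and the cost estimate for reconstructing $\SL{X}$ from $\SLmod{X, v}$ together with the stored positions of $N_{\GiX}[v]$. The reconstruction of $\SL{X}$ from $\SL{Y}$ when backtracking from $Y$ to $X$ proceeds in two stages: first, starting from $\SL{Y}$, undo the shift operations recorded in $\SLdiff{\SLmod{X, v}, \SL{Y}}$ in reverse order to recover $\SLmod{X, v}$; second, reinsert the deleted neighborhood $N_{\GiX}[v]$ into $\SLmod{X, v}$ at the stored positions to obtain $\SL{X}$.

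For the time bound I would argue as follows. Each shift operation $\op{SL, u, p}$ shifts a single element past at most $p \le n_X$ positions, so undoing it costs $\order{n_Y}$ time (the sequence $\SL{Y}$ has length $n_Y$, and each $p_j \le n_Y$). Hence undoing all $\ell = \size{\SLdiff{\SLmod{X, v}, \SL{Y}}}$ shifts along this single backtracking step costs $\order{\ell \cdot n_Y}$. Naively this is not yet $\order{n_Y^2}$ per step, but the point of Lemma~\ref{lem:sl} is that $\sum_j \ell_j \le m$ over the entire root-to-leaf path, so the total cost summed over all backtracking steps on the path is $\order{m \cdot n}$ — which is dominated by the $\order{n_Y^2}$ charges already absorbed in the time-complexity analysis of Section~\ref{sec:algoq2}, since each iteration has time complexity $\Omega(n_Y^2)$ and there are at most $n$ iterations on the path. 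Actually, to match the stated per-backtracking bound $\order{n_X + n_Y^2}$, I would observe that $\ell_j \le \size{SL(I_{j+1})}^2 = \order{n_{Y}^2}$-worth of work is already accounted for, but more carefully: $\ell_j$ applications of shifts of magnitude $\le n_Y$ each gives $\order{\ell_j n_Y}$, and since $\ell_j \le n_Y$ trivially (at most $n_Y$ distinct vertices move) this is $\order{n_Y^2}$; the reinsertion of $N_{\GiX}[v]$ costs $\order{\size{N_{\GiX}[v]}} = \order{n_X - n_Y} = \order{n_X}$. Summing, $\order{n_X + n_Y^2}$ per backtracking step.

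For the space bound I would combine three observations already established: (i) the partial orderings $\SLmod{X, v}$ stored along the search stack, together with the shift-sequences $\SLdiff{\cdot, \cdot}$, have total length at most $m$ by Lemma~\ref{lem:sl}; (ii) the stored positions of each $u \in N_{G(X)}[v]$ in $\SL{X}$ need $\order{n}$ space in total, since each vertex is deleted from the graph at most once along the path from the current iteration to the root; and (iii) by Lemma~\ref{lem:rl}, the run-length encoded matrices maintained along the stack contribute $\order{n+m}$. Adding these gives $\order{n+m}$ total space, as claimed.

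The main obstacle I anticipate is the bookkeeping in the first stage — correctly specifying that undoing the recorded shift operations in reverse order genuinely recovers $\SLmod{X, v}$, and that doing so stays within $\order{n_X + n_Y^2}$ rather than, say, $\order{n_X \cdot n_Y}$. This hinges on the fact established just before the lemma that $\SLdiff{\SLmod{X, v}, \SL{Y}}$ has length $\ell \le n_Y$ and each shift magnitude $p_j \le n_Y$, so the reverse pass is no more expensive than the forward pass that built the sequence; everything else is a direct assembly of Lemmas~\ref{lem:sl} and the surrounding remarks.
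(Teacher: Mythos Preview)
Your two-stage reconstruction (undo the recorded shifts to recover $\SLmod{X,v}$, then reinsert $N_{\GiX}[v]$ at the stored positions) is exactly the paper's approach, and the per-step time bound you eventually reach is correct: $\ell \le n_Y$ shifts at cost $\order{n_Y}$ each, plus $\order{\size{N_{\GiX}[v]}} \le \order{n_X}$ for reinsertion. The $\order{mn}$ detour is both wrong as stated (there is no reason $mn$ is dominated by $\sum_j n_{Y_j}^2$ along a single root-to-leaf path) and unnecessary, since you immediately give the right argument afterwards; just delete it. Note also that the justification ``at most $n_Y$ distinct vertices move'' for $\ell \le n_Y$ is a little loose: the definition of $\SLdiff{\cdot,\cdot}$ does not by itself forbid a vertex from being shifted several times, but one can always \emph{choose} the shift sequence so that each vertex moves at most once (this is implicit in the paper's remark that $\SLdiff{\SLmod{X,v},\SL{Y}}$ is computable in $\order{\size{\SL{Y}}^2}$ time).

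Your space argument has a real slip. You write that ``the partial orderings $\SLmod{X, v}$ stored along the search stack \dots\ have total length at most $m$ by Lemma~\ref{lem:sl}''. But the partial orderings $\SLmod{X, v}$ are \emph{not} stored --- if they were, the stack could have total size $\Theta(n^2)$ (e.g.\ when each step removes only $\order{1}$ vertices) --- and Lemma~\ref{lem:sl} bounds only the shift-sequence lengths, not the orderings. The whole point of the construction is that only $\SLdiff{\cdot,\cdot}$ (total length $\le m$ by Lemma~\ref{lem:sl}) and the positions of the removed vertices (total $\order{n}$) are kept on the stack; $\SLmod{X, v}$ is recomputed from $\SL{Y}$, which is precisely what your first stage does. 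Your item (iii) about the run-length matrices is also extraneous here; it belongs to Lemma~\ref{lem:restore:row}, not to the present lemma.
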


From Lemma~\ref{lem:space}, \EnumIS demands $\order{n + m}$ space and $\order{n_X^2}$ time for each iteration $X$ on average for restoring the smallest-last orderings. 

\subsubsection{Run-length encoded adjacency matrices}

In this subsection, we demonstrate how to restore each row of $\MR{\GiX}$. 
Let $u$ be a vertex in $V(\GiX)$. 
Recall that $v$ is picked from $V(\GiX)$ and added to $S(Y)$. 
If  $u \notin V(\GYi)$, 
by just adding $\RowR{\GiX}{u}$ to $\RowR{\GYi}{u}$, 
we can restore $\RowR{\GiX}{u}$ since \EnumIS keeps $\RowR{\GiX}{u}$ until backtracking.
In addition, once $u$ is removed from $\GiX$, it will never appear in the input graph of a descendant iteration of $X$. 
Thus, \EnumIS requires linear space in total for storing removed $\RowR{\GiX}{u}$. 

Suppose $u \in V(\GYi)$.  
To restore $\RowR{\GiX}{u}$ by adding some vertices to $\GYi$, 
we use data structures $\LongInRow{\GiX}{\GYi}{u}$ and $\LongInRowR{\GiX}{\GYi}{u}$.
$\LongInRow{\GiX}{\GYi}{u}$ represents the neighborhood of $u$ that is removed from $\GiX$ to obtain $\GYi$, and
$\LongInRowR{\GiX}{\GYi}{u}$ is its run-length encoded representation. 
In the following,  
we fix $X$, $Y$, and $u$, and 
we abuse notation using $\LongInRow{\GiX}{\GYi}{u}$ and $\LongInRowR{\GiX}{\GYi}{u}$ to denote $\InRow{\GiX}{\GYi}{u}$ and $\InRowR{\GiX}{\GYi}{u}$, respectively.

Their precise definitions are as follows: 
$\InRow{\GiX}{\GYi}{u}$ is a $\zero$-$\one$ sequence with length $d_{\GiX}(v)$. 
The $j$-th element of $\InRow{\GiX}{\GYi}{u} = \one$ if the $j$-th neighbor of $v$ in $\GiX$ is adjacent to both $v$ and $u$ in $\GiX$. 
Otherwise, $\InRow{\GiX}{\GYi}{u} = \zero$. 
$\InRowR{\GiX}{\GYi}{u}$ is the run-length encoded sequence of $\InRow{\GiX}{\GYi}{u}$. 
By scanning $\Row{\GiX}{v}$, $\InRowR{\GiX}{\GYi}{u}$, and $\RowR{\GYi}{u}$, from the head simultaneously, 
we can efficiently obtain the removed vertices from $\RowR{\GiX}{u}$. 
Figure~\ref{fig:restore:gyu} provides an intuitive explanation of how this is achieved. 
The values of position 1, 2, 4, 6, and 7 of $\Row{\GiX}{u}$ come from $\InRow{\GiX}{\GYi}{u}$. %
The remaining values come from $\Row{\GYi}{u}$. 

\begin{figure}
    \centering
        \begin{tabular}{c|rrrrrrrrrrr}
             Position & 1      & 2      & 3      & 4      & 5      & 6      & 7      & 8      & 9      & 10 & 11\\
               \hline
            $\Row{\GiX}{v}$         & 1      & 1      & 0      & 1      & 0      & 1      & 1      & 0      & 0      & 0 & 0\\
            $\InRow{\GiX}{\GYi}{u}$ & $x_1$  & $x_2$  & $x_3$  & $x_4$  & $x_5$                                           \\
            $\Row{\GYi}{u}$           & $y_1$  & $y_2$  & $y_3$  & $y_4$  & $y_5$ & $y_6$                                          \\
            \hline
            $\Row{\GiX}{u}$         & $x_1$  & $x_2$  & $y_1$  & $x_3$  & $y_2$  & $x_4$  & $x_5$  & $y_3$  & $y_4$  & $y_5$ & $y_6$ \\
        \end{tabular}
        \caption{Restoring $\Row{\GiX}{u}$ using $\Row{\GiX}{v}$, $\InRow{\GiX}{\GYi}{u}$, and $\Row{\GYi}{u}$. }
    \label{fig:restore:gyu}
\end{figure}

\begin{lemma}
    \label{lem:restore:row}
    Let $v$ be a vertex with the minimum degree in $\GiX$, and $\GYi = \GiX \setminus N_{\GiX}[v]$. 
    We assume that the followings are given: 
    $\MR{\GYi}$, 
    for each $u \notin N_{\GiX}[v]$, 
    $\InRowR{\GiX}{\GYi}{u}$ and the position of $u$ on the smallest-last ordering of $\GiX$, and
    the smallest-last ordering of $\GYi$ with the shift operation sequence for the smallest-last ordering of $\GiX$. 
    Then, we can restore $\MR{\GiX}$ in $\order{n_X + n_Y^2}$ time 
    and $\order{\size{V(\GiX)}}$ space. 
\end{lemma}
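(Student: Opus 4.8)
The plan is to rebuild $\MR{\GiX}$ from $\MR{\GYi}$ one row at a time, handling the rows indexed by $N_{\GiX}[v]$ and those indexed by $V(\GYi)=V(\GiX)\setminus N_{\GiX}[v]$ separately, after first recovering the smallest-last ordering of $\GiX$. I would begin by applying Lemma~\ref{lem:space} to the given $\SL{\GYi}$, the given shift-operation sequence relating $\SLmod{\GiX, v}$ to $\SL{\GYi}$, and the stored positions of the vertices of $N_{\GiX}[v]$ on $\SL{\GiX}$; this yields $\SL{\GiX}$ in $\order{n_X+n_Y^2}$ time, stored in a length-$n_X$ array. From $\SL{\GiX}$ and $\SL{\GYi}$ I then precompute in $\order{n_X}$ further time the permutation of $V(\GYi)$ that relates the restriction of $\SL{\GiX}$ to $V(\GYi)$ with $\SL{\GYi}$; this is what lets the later merges address $\Row{\GYi}{u}$ in the correct column order.

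The rows indexed by $u\in N_{\GiX}[v]$ — in particular $\RowR{\GiX}{v}$ — are not recomputed: \EnumIS retains every such $\RowR{\GiX}{u}$ in memory until it backtracks from $\GYi$ (this is the bookkeeping already charged to the $\order{n+m}$ global space bound), so it suffices to re-insert these $\size{N_{\GiX}[v]}=n_X-n_Y$ rows into the matrix, which costs $\order{n_X}$ and copies nothing. For a row $u\in V(\GYi)$, the sequence $\Row{\GiX}{u}$ is an interleaving along the columns of $\GiX$ of three parts: at a column of $N_{\GiX}(v)$ it is the next entry of $\InRow{\GiX}{\GYi}{u}$, at the column $v$ it is $\zero$ (since $u\notin N_{\GiX}[v]$), and at a column of $V(\GYi)$ it is the corresponding entry of $\Row{\GYi}{u}$; the interleaving pattern is exactly $\Row{\GiX}{v}$. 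I would thus sweep $\RowR{\GiX}{v}$, $\InRowR{\GiX}{\GYi}{u}$ and $\RowR{\GYi}{u}$ simultaneously from their heads, emitting each maximal run of $\RowR{\GiX}{u}$ in a single step as in Figure~\ref{fig:restore:gyu}; when the permutation from the first paragraph is nontrivial one first expands $\RowR{\GYi}{u}$ into a length-$n_Y$ bit array and permutes it into $\GiX$-column order, which costs $\order{n_Y}$.

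The running-time bound hinges on the claim that $\size{\InRowR{\GiX}{\GYi}{u}}=\order{n_Y}$ for every $u\in V(\GYi)$. Since $v$ has minimum degree in $\GiX$ and $u\not\sim v$, the number of non-neighbours of $u$ in $\GiX$ is at most $n_X-1-d_{\GiX}(v)=n_Y$, one of which is $v$ itself, so $u$ has at most $n_Y-1$ non-neighbours inside $N_{\GiX}(v)$; these are precisely the zeros of $\InRow{\GiX}{\GYi}{u}$, and Lemma~\ref{lem:01} then bounds its length by $2n_Y$. Together with $\size{\RowR{\GYi}{u}}=\order{n_Y}$ and $\size{\RowR{\GiX}{u}},\size{\RowR{\GiX}{v}}=\order{n_Y}$ from Lemma~\ref{lem:rlm}, each of the $n_Y$ rows indexed by $V(\GYi)$ is rebuilt in $\order{n_Y}$ time, for $\order{n_Y^2}$ in total; adding $\order{n_X+n_Y^2}$ for the ordering and $\order{n_X}$ for re-inserting the stored rows gives the claimed $\order{n_X+n_Y^2}$ time, while the only nonconstant temporary storage is the length-$n_X$ arrays for $\SL{\GiX}$ and the derived permutation, i.e.\ $\order{n_X}$ space.

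I expect the main obstacle to be keeping the merge both correct and budget-tight: $\SL{\GYi}$ is in general not the restriction of $\SL{\GiX}$ to $V(\GYi)$, so the three sequences do not line up naively and must be reconciled through the shift-sequence and stored-position data, and this reconciliation (as well as every per-row step) must be carried out without ever expanding a length-$n_X$ row — otherwise the per-row cost degrades from $\order{n_Y}$ to $\order{n_X}$ and the bound is lost. The size estimate $\size{\InRowR{\GiX}{\GYi}{u}}=\order{n_Y}$, which rests on $v$ having minimum degree, is exactly what makes staying inside $\order{n_Y}$ per row feasible.
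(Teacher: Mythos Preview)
Your proposal is correct and follows essentially the same route as the paper: recover $\SL{\GiX}$ via Lemma~\ref{lem:space}, reinstate the stored rows for $N_{\GiX}[v]$, and for each $u\in V(\GYi)$ rebuild $\RowR{\GiX}{u}$ by a three-way merge of $\RowR{\GiX}{v}$, $\InRowR{\GiX}{\GYi}{u}$, and $\RowR{\GYi}{u}$ at cost $\order{n_Y}$ per row. If anything, your argument is a bit more explicit than the paper's on why $\size{\InRowR{\GiX}{\GYi}{u}}=\order{n_Y}$ (the paper merely asserts the corresponding bound on the number of merge blocks), and your handling of the column reordering via a length-$n_Y$ scratch array is a clean equivalent of the paper's ``reorder $\RowR{\GYi}{u}$ by scanning $\SL{\GiX}$'' step.
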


\begin{proof}
    Let $u$ be a vertex in $V(\GYi)$. 
    From Lemma~\ref{lem:space}, we can obtain the smallest-last ordering of $\GiX$ in the claimed time complexity. 
    In addition, by scanning the smallest-last ordering of $\GiX$ from the head, we can reorder $\RowR{\GYi}{u}$.
    This reorder needs $\order{n_X + n_Y^2}$ time and $\order{\size{V(\GiX)}}$ space.
    
    Next, we consider how to restore $\RowR{\GiX}{u}$. 
    If we decode $\RowR{\GiX}{u}$ naively, then we need $\order{n_X}$ time for each $u$ and $\order{n_X^2}$ time in total. 
    This may exceed $\order{n_X + n_Y^2}$ time. 
    Thus, we employ efficient restoring procedure, in particular, without decoding. 

    We first consider the restoring procedure for $\zero$-$\one$ sequences, i.e., non-compressed sequences. 
    See Fig.~\ref{fig:restore:gyu} for a concrete example. 
    To compute $\Row{\GYi}{u}$, we first cut $\Row{\GiX}{u}$ into intervals and then concatenating some intervals. 
    $\InRow{\GiX}{\GYi}{u}$ was obtained by concatenating the remaining intervals that are not included in $\Row{\GYi}{u}$. 
    Hence, we can restore $\Row{\GiX}{u}$ by combining $\Row{\GYi}{u}$ and $\InRow{\GiX}{\GYi}{u}$. 
    This can be done by scanning $\Row{\GiX}{v}$ from the head. Note that $\Row{\GiX}{v}$ is stored when $v$ is added to a solution. 
    Moreover, both $\Row{\GYi}{u}$ and $\InRow{\GiX}{\GYi}{u}$ are sorted in the order of $\Row{\GiX}{v}$. 
    If a value of $\Row{\GiX}{v}$ is $\one$, then the corresponding value of $\Row{\GiX}{u}$ is recorded in $\InRow{\GiX}{\GYi}{u}$ since the corresponding vertex is adjacent to $v$ and now $v$ is in a solution. 
    Otherwise, the corresponding value is recorded in $\Row{\GYi}{u}$.
    Thus, if we meet $\one$ on $\Row{\GiX}{v}$, 
    then we put the corresponding value of $\InRow{\GiX}{\GYi}{u}$ to $\Row{\GiX}{u}$.
    Otherwise, then we put the corresponding value of $\Row{\GYi}{u}$ to $\Row{\GiX}{u}$.

    This idea can be also extended to the run-length encoded sequences. 
    Let $\Row{\GiX}{v} = (a_1, b_1, \dots, a_k, b_k)$. 
    First, we partition $\InRowR{\GiX}{\GYi}{u}$ into $A_1, A_2, \dots, A_{k'}$ by using $a_1, a_2, \dots, a_k$ as follows:  
    \begin{itemize}
        \item  $A_1 = (\InRowR{\GiX}{\GYi}{u}[1], \InRowR{\GiX}{\GYi}{u}[2], \dots, \InRowR{\GiX}{\GYi}{u}[j_1], \alpha_1)$  such that  \\
            $\displaystyle\sum_{j'=1,\dots, j_1} \InRowR{\GiX}{\GYi}{u}[j'] \le a_1 < \sum_{j'=1,\dots, j_1+1} \InRowR{\GiX}{\GYi}{u}[j']$ and 
            $\displaystyle a_1 = \alpha_1 + \sum_{j'=1,\dots, j_1} \InRowR{\GiX}{\GYi}{u}[j']$. 
        \item  $A_i = (\InRowR{\GiX}{\GYi}{u}[j_{i-1}+1]-\alpha_{i-1}, \InRowR{\GiX}{\GYi}{u}[j_{i-1}+2], \dots, \InRowR{\GiX}{\GYi}{u}[j_i], \alpha_i)$  such that  \\
            $\displaystyle \sum_{j'=j_{i-1}+1,\dots, j_i} \InRowR{\GiX}{\GYi}{u}[j'] \le a_i + \alpha_{i-1} < \sum_{j'=j_{i-1}+1,\dots, j_i+1} \InRowR{\GiX}{\GYi}{u}[j']$ and 
            $\displaystyle a_i = \alpha_i - \alpha_{i-1} + \sum_{j'=j_{i-1}+1,\dots, j_i} \InRowR{\GiX}{\GYi}{u}[j']$. 
    \end{itemize}
    By the similar way, 
    we also partition $\RowR{\GYi}{u}$ into $B_1, \dots, B_{k''}$ by using $b_1, b_2, \dots, b_k$. 
    Then, we alternatively put together $A_1, \dots A_{k'}$ and $B_1, \dots, B_{k''}$ into one sequence like the non-compressed version. 
    If the both of two consecutive values on the resultant sequence correspond to either $\one$ or $\zero$, 
    then we sum them up and replace it by the summation. 
    In addition, $k'$ and $k''$ can be bounded by $\order{n_Y}$. 
    Thus, for each $u \in \GYi$, this restoring can be done in $\order{n_Y}$ time. 
    In addition, for each $u \notin \GYi$, $\RowR{\GiX}{u}$ is stored when $X$ calls its child iteration. 
    Hence, we can restore $\MR{\GiX}$ in the claimed time complexity. 
    From Lemma~\ref{lem:rl} and $n_X \ge n_Y$, the above procedure only uses $\order{\size{V(\GiX)}}$ space. 
\end{proof}

Next, we consider the total space usage of $\InRowR{\cdot}{\cdot}{\cdot}$. 
From Lemma~\ref{lem:right}, we can easily see that we need $\order{d_{\GiX}(u) - d_{\GYi}(u)}$ space for storing $\InRowR{\GiX}{\GYi}{u}$ for $u$. 
However, if \EnumIS stores $\InRowR{\GiX}{\GYi}{u}$ such that $\InRow{\GiX}{\GYi}{u}$ does not contain $\one$, 
then the total space cannot be bounded by $\order{n + m}$ since each iteration has $\order{n_Y}$ additional space in the worst case and $\order{n^2}$ space in total. 
Hence,  we store $\InRowR{\GiX}{\GYi}{u}$ if $\InRow{\GiX}{\GYi}{u}$ contains at least one $\one$. 
These are stored by using a doubly linked list ordered in the smallest last ordering. 
Note that we now have the smallest-last ordering of $\GiX$ from the previous subsection. 
Thus, by scanning the list from the head, 
we can easily find vertices $u$ such that $\InRow{\GiX}{\GYi}{u}$ consists only of $\zero$s,  
and obtain $\InRow{\GiX}{\GYi}{u}$ by just filling  $\zero$s. 
Thus, \EnumIS requires linear space for computing the data structures. 
In addition, the restoration of $G_i(X)$ from $G_{i+1}(X)$ can be done in $\order{n_X}$ time.
Combining with the discussion in the previous section, we can obtain the following theorem.

\begin{theorem}
\label{theo:linear}
    \EnumIS enumerates all independent sets in $\order{q}$ time using $\order{n + m}$ space even if the exact value of $q$ is unknown, 
    where $n$ represents the number of vertices, $m$ represents the number of edges, and $q$ is the minimum number such that $G$ does not contain a clique with $q$ vertices. 
\end{theorem}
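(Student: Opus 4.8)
The plan is to assemble Theorem~\ref{theo:linear} from the two halves already developed: the amortized $\order{q}$ time bound of Theorem~\ref{theo:constant}, and the space bookkeeping of the data structures introduced in this section (the run-length encoded adjacency matrix of Lemma~\ref{lem:rl}, the partial smallest-last orderings of Lemma~\ref{lem:sl} and Lemma~\ref{lem:space}, and the restoration machinery of Lemma~\ref{lem:restore:row}). Concretely, I would first argue that replacing the adjacency matrix by $\MR{\cdot}$ does not change the time complexity of any iteration. For a child iteration $Y_i$ of $X$, generating $\MR{G(Y_i)}$ from $\MR{G_i(X)}$ costs $\order{n_X - i + \sum_{u \in V(Y_i)}(\size{\RowR{G_i(X)}{u}} + n_{Y_i})}$ by Lemma~\ref{lem:right}, which is $\order{n_X - i + n_{Y_i}^2}$ by Lemma~\ref{lem:rlm}; summing over $i$, the $\order{n_X - i}$ terms contribute $\order{n_X^2} = \order{T(X)}$ and each $\order{n_{Y_i}^2} = \order{T(Y_i)}$ term is pushed to $Y_i$. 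Likewise, the smallest-last ordering of $G(X)$ is computed in $\order{n_X^2}$ time from $\MR{G(X)}$, and backtracking from $Y_i$ to $X$ — restoring both the ordering and the matrix — costs $\order{n_X + n_{Y_i}^2}$ by Lemma~\ref{lem:space} and Lemma~\ref{lem:restore:row}, absorbed into $T(X)$ and $T(Y_i)$ in the same way. Hence every iteration still runs in $\order{n_X^2}$ time, so the entire time analysis of Section~4 goes through verbatim and the amortized bound is $\order{q}$.

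For the space bound I would track what data persists along the current root-to-leaf path of the recursion, since that is all that is live at any moment. There are five families of stored objects, listed at the start of the section. The matrix $\MR{G_i(X)}$ of the current iteration costs $\order{n+m}$ by Lemma~\ref{lem:rl}. For items that are retained when recursing into a child: the rows $\RowR{G_i(X)}{u}$ for removed vertices $u \in N[v]$, and the removed rows for vertices $u \notin V(Y_i)$, are charged to $u$; since each vertex is deleted at most once along any root-to-leaf path and a deleted row costs $\order{d_G(u)}$, these sum to $\order{n+m}$. The stored position of each $u \in N[v]$ on the ordering and the first-$i$-elements prefixes are $\order{1}$ per deleted vertex, hence $\order{n}$ total. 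The shift-operation sequences $\SLdiff{\SLmod{\cdot,\cdot}}{\SL{\cdot}}$ sum to at most $m$ along the path by Lemma~\ref{lem:sl}. Finally the $\InRowR{G_i(X)}{G(Y_i)}{u}$ structures: storing one only when $\InRow{G_i(X)}{G(Y_i)}{u}$ contains at least one $\one$ caps it, because a $\one$ records an edge of $G$ incident to $u$ that is being deleted, and across the path each such edge is charged once, giving $\order{n+m}$ total; the all-$\zero$ cases are reconstructed for free by filling zeros between consecutive entries of the doubly linked list. Summing all five families gives $\order{n+m}$ space.

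The step I expect to require the most care is making the accounting of the $\InRowR{\cdot}{\cdot}{\cdot}$ structures airtight, since the "naive" choice of always storing $\InRowR{G_i(X)}{G(Y_i)}{u}$ costs $\order{n_{Y_i}}$ per child and $\order{n^2}$ overall — so the argument must genuinely exploit that entries recording a $\one$ correspond bijectively to edge-deletions along the path, and that the suppressed all-$\zero$ rows can be regenerated in the claimed per-iteration time by walking the smallest-last-ordered linked list and interpolating zeros. Everything else is a direct citation of the lemmas above. I would therefore phrase the proof as: (i) a paragraph observing that the time bound of Theorem~\ref{theo:constant} survives the change of representation by the cited lemmas; (ii) a paragraph enumerating the five stored families and bounding each by $\order{n}$ or $\order{n+m}$ along a root-to-leaf path; (iii) a concluding sentence combining the two to obtain $\order{q}$ time and $\order{n+m}$ space, noting as before that the boundary $2q$ is never used by the algorithm itself so correctness and these bounds hold without knowing $q$.
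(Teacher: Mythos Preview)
Your proposal is correct and follows essentially the same approach as the paper: the paper's own ``proof'' of Theorem~\ref{theo:linear} is simply the sentence ``Combining with the discussion in the previous section, we can obtain the following theorem,'' and what you have written is precisely that combination spelled out --- the $\order{q}$ time from Theorem~\ref{theo:constant} survives the change to run-length encoding by Lemmas~\ref{lem:right}, \ref{lem:rlm}, \ref{lem:space}, \ref{lem:restore:row}, and the five families of persistent data along a root-to-leaf path are each $\order{n+m}$ by Lemmas~\ref{lem:rl}, \ref{lem:sl} and the store-only-if-it-contains-a-\one\ trick for $\InRowR{\cdot}{\cdot}{\cdot}$. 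Your identification of the $\InRowR{\cdot}{\cdot}{\cdot}$ accounting as the delicate step matches exactly where the paper spends its effort.
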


Finally, we obtain the following corollary from Theorem~\ref{theo:linear} 
since $K_q$ contains any graph with $q$ vertices as a subgraph, 
e.g., the graph class of $K_5$-free graphs contains planar graphs. 

\begin{corollary}
    Let $\sig R$ be a set of graphs and $\sig C$ be a graph class such that any graph in $C$ does not have a graph in $\sig R$ as a subgraph. 
    If $\sig R$ contains a graph with constant size, 
    then \EnumIS enumerates all independent sets in a given graph $G \in \sig C$ 
    with $\order{1}$ amortized time and $\order{n + m}$ space.
\end{corollary}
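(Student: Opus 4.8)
The plan is to reduce the corollary to Theorem~\ref{theo:linear} by showing that membership in $\sig{C}$ forces a constant clique number. First I would fix a graph $F \in \sig{R}$ of constant size, say $|V(F)| = c$, whose existence is exactly the hypothesis that $\sig{R}$ contains a graph of constant size. Then I would observe that the complete graph $K_c$ contains $F$ as a subgraph, since every graph on $c$ vertices embeds into $K_c$. Consequently, if some $G \in \sig{C}$ contained $K_c$ as a subgraph, it would also contain $F$ as a subgraph, contradicting the defining property of $\sig{C}$. Hence every $G \in \sig{C}$ is $K_c$-free, i.e., has no clique on $c$ vertices, so its clique number $q$ satisfies $q \le c = \order{1}$.

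With this bound in hand, the second step is simply to invoke Theorem~\ref{theo:linear}: on any $K_c$-free graph, \EnumIS enumerates all independent sets in $\order{q}$ amortized time and $\order{n + m}$ space, and since \EnumIS does not require the value of $q$ as input, it runs unchanged on $G$. Substituting $q \le c = \order{1}$ yields $\order{1}$ amortized time and $\order{n + m}$ space, as claimed.

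I do not anticipate a genuine obstacle here; the only point needing a line of care is the reading of the hypothesis: ``$G$ has no graph in $\sig{R}$ as a subgraph'' must in particular be applied to $F$ itself, and the containment $F \subseteq K_{|V(F)|}$ is what lets the forbidden-subgraph condition propagate from $\sig{R}$ to $K_c$ and hence bound $q$. Everything else is a direct substitution into the already-established Theorem~\ref{theo:linear}.
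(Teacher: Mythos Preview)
Your proposal is correct and matches the paper's own reasoning: the paper derives the corollary directly from Theorem~\ref{theo:linear} via the observation that $K_c$ contains every $c$-vertex graph (in particular the constant-size $F \in \sig R$), so every $G \in \sig C$ is $K_c$-free and hence has $q = \order{1}$. Your writeup simply spells this out in slightly more detail.
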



When \EnumIS makes an iteration $X$, the size of the input graph for $X$ is smaller than that of the parent iteration. 
This reducing procedure can be regarded as a \name{kernelization} technique for FPT algorithms~\cite{Niedermeier:IPL:2000}. 
Thus, combining this technique with push out amortization, 
we showed a good boundary on a recursion tree and demonstrated a sophisticated complexity analysis.  
Our future work will examine the possibility of applying our novel technique to other enumeration problems. 

\section*{Acknowledgements}
The authors thank A. Conte for helpful comments which led to improvements in this paper. This work was partially supported by JSPS KAKENHI Grant Number JP19J10761 and JP19K20350, and JST CREST Grant Number JPMJCR18K3 and JPMJCR1401, Japan.

\bibliographystyle{abbrv}
\bibliography{main.bbl}


\end{document}